\documentclass[journal]{IEEEtran}
\usepackage{amsmath,amsfonts}
\usepackage{algorithmic}
\usepackage{algorithm}
\usepackage{array}
\usepackage{textcomp}
\usepackage{stfloats}
\usepackage{url}
\usepackage{verbatim}
\usepackage{graphicx}
\usepackage{cite}
\usepackage{amssymb}

\newtheorem{proposition}{Proposition}
\newtheorem{theorem}{Theorem}

\newcounter{prob}
\renewcommand{\theprob}{P\arabic{prob}}

\usepackage[
    caption=false,
    font=footnotesize,
    labelfont=rm,
    textfont=rm
]{subfig}
\begin{document}

\title{UAV Rotatable Antenna under Rotation Delay}

\author{Yinmei~Xie,~\IEEEmembership{Graduate Student Member,~IEEE}, Chiya~Zhang,~\IEEEmembership{Member,~IEEE},
Chunlong~He,~\IEEEmembership{Member,~IEEE},
Weizheng~Zhang,~\IEEEmembership{Member,~IEEE},
Wei~Wang,~\IEEEmembership{Senior Member,~IEEE}
}



\maketitle

\begin{abstract}
Rotatable antennas (RAs) can improve wireless link performance by adjusting their boresight directions. 
However, existing RA studies generally assume instantaneous antenna reconfiguration, whereas mechanical rotation incurs non-negligible delay in practical deployment.
To address this issue, we develop a rotation-delay-aware communication model for finite communication slots, where mechanical rotation directly reduces the time available for data transmission. This model reveals a fundamental tradeoff between the communication gain achieved through RA reorientation and the associated transmission-time loss. Accordingly, we formulate a rotation-delay-aware joint design of RA orientations and transmit beamforming to maximize the minimum effective transmission rate.
For the single-user system, under a sufficient condition for strict concavity, the global optimum falls into one of three regimes: no rotation, partial rotation, and full rotation.
A closed-form approximation is derived using a second-order expansion. 
For the multi-user system, we separate rotation-budget selection from fixed-budget max-min signal-to-interference-plus-noise ratio (SINR) optimization.
An analytic-gradient zero-forcing (AG-ZF) algorithm is then developed for low-complexity joint RA orientation and rotation-budget optimization.
The results show that explicitly accounting for mechanical rotation delay is essential for finite-slot RA-enabled UAV communications, and that the proposed designs effectively balance reconfiguration gain and transmission-time loss while achieving favorable effective-rate performance with low computational complexity.

\end{abstract}

\begin{IEEEkeywords}
Antenna orientation, rotatable antenna, rotation delay, UAV downlink communications, zero-forcing beamforming.
\end{IEEEkeywords}


\section{Introduction}
\IEEEPARstart{U}{nmanned} aerial vehicles (UAVs) have been widely used in inspection, logistics, mapping, emergency communications, and cooperative sensing due to their high mobility, flexible deployment, and three-dimensional coverage capabilities~\cite{ZengYongUAV,HayatUAV,MozaffariUAV,ZengUAVTutorial}. 
In UAV downlink communications, a mobile UAV can serve multiple spatially distributed users~\cite{XuUAV}. As the UAV moves, the relative geometry and propagation directions of the links between the UAV and the users can change over time. 
In mission-driven applications, UAV trajectories may be predetermined~\cite{PredeterminedTrajectory}, motivating antenna-side adaptation without altering the flight path~\cite{GuoBeamforming}.
Conventional antenna arrays can adapt their beamforming weights to the instantaneous channel~\cite{Khallilibeamforming,Lyubeamforming}, but their element positions and orientations remain fixed, limiting their spatial reconfiguration capability.
When users are located in low-gain antenna directions or their channels are highly correlated, beamforming alone may provide limited performance improvement. 
To improve channel adaptability, several reconfigurable antenna architectures have recently been investigated, including movable antennas (MAs), six-dimensional movable antennas (6DMAs), and rotatable antennas (RAs)~\cite{MA3D,HuaHaocheng6DMA,ZhengBeixiongRA,ZhengBeixiongRASurvey,PeiFengchengOptRA,RAModel,XiangLinEnergy}. 
Such architectures can reconfigure the wireless channel by adjusting antenna positions or orientations. 
They can improve link gains and reduce inter-user channel correlation.
Among these architectures, RAs adjust the boresight directions of antenna elements while keeping their physical positions fixed. 
Hence, RAs can change the effective antenna gains toward different propagation directions without requiring additional space for antenna translation. 
This feature makes RAs attractive for size, weight, and power (SWaP)-constrained UAV platforms~\cite{ZhangXuzhongRAUAV,RAUAV,RAUAV3,RAUAV2}.
RA reconfiguration can be implemented using mechanical or electronic mechanisms. Electronic RAs offer fast reconfiguration, but their achievable steering states are generally constrained by the underlying tunable hardware~\cite{Chiao1999MEMS,ZhengBeixiongTutorial}. In contrast, mechanical RAs directly rotate the physical antenna boresight and provide continuous wide-angle orientation control~\cite{ZhengBeixiongTutorial}. Such flexibility is well suited to UAV downlink communications with spatially distributed users and time-varying link directions, enabling directional antenna gains to be more effectively exploited through boresight adjustment. Therefore, this work focuses on mechanically driven RAs.

Most existing RA studies optimize communication performance after antenna reconfiguration without explicitly accounting for the execution time of rotation~\cite{ZhengBeixiongRA,PeiFengchengOptRA,ZhangXuzhongRAUAV,RAModel,PeiFengchengRA}.
Recent work has further considered finite rotation-speed constraints and the energy consumption of dynamic mechanical steering~\cite{XiangLinEnergy}. However, these studies do not explicitly model the reduction in available data-transmission time caused by mechanical rotation within a finite communication slot.
Related work on movable antennas (MAs) has incorporated mechanical movement delay into finite transmission blocks~\cite{WangHonghaoMA,LiQingliangMA,ZhengXiMA}, but the resulting delay-aware designs are based on antenna repositioning and are not directly applicable to RA systems.
Mechanical RA reorientation incurs non-negligible delay due to finite rotation speed~\cite{ZhengBeixiongTutorial,ViewproTraceG1}, reducing the time available for data transmission within a finite communication slot.
Hence, RA reconfiguration involves a tradeoff between communication improvement and transmission-time loss.
Optimizing only the communication performance after reconfiguration can be suboptimal when rotation time is included in the slot.

This motivates explicit consideration of mechanical rotation delay in finite-slot RA communications. Incorporating rotation delay, however, makes the RA design more involved. Since all RAs rotate in parallel, the reconfiguration delay is determined by the maximum rotation angle, which couples the orientation decisions of different antennas through a common loss of transmission time. Meanwhile, the RA orientations determine the directional channel gains and hence the multi-user channel matrix, making them strongly coupled with transmit beamforming and inter-user interference. As a result, the rotation budget, RA orientations, and transmit beamforming are tightly coupled in the original formulation, leading to a challenging nonconvex optimization problem.

Building on the above discussion, we study an RA-enabled UAV downlink communication system with mechanical rotation delay under finite communication slots, as illustrated in Fig.~\ref{fig场景图}. 
Unlike existing RA models that neglect reconfiguration delay, we explicitly account for the time consumed by antenna rotation. 
The main contributions are summarized as follows.

\begin{figure*}[!t]
	\centering
	\includegraphics[
		width=0.9\textwidth,
		clip
	]{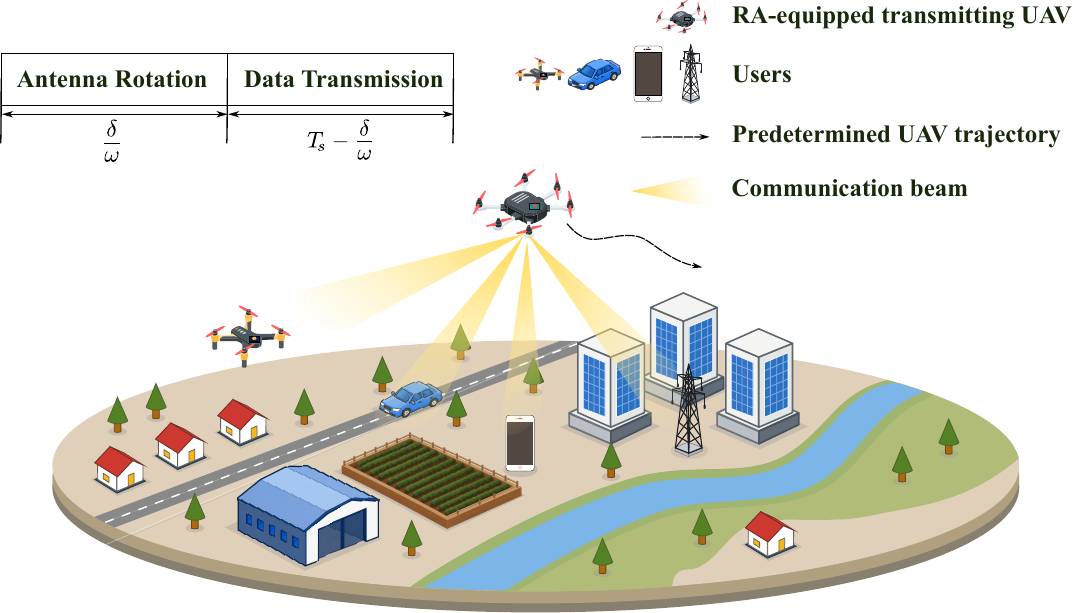}
	\caption{Illustration of the considered RA-enabled UAV downlink communication scenario.}
	\label{fig场景图}
\end{figure*}

\begin{itemize}

    \item We first develop a rotation-delay-aware communication model for mechanically driven RAs under finite communication slots.
    With parallel antenna rotation, the reconfiguration delay is determined by the maximum rotation angle among all RAs.
    Based on this model, we define the effective transmission rate and formulate a joint optimization problem for RA orientations, transmit beamforming, and the rotation budget.
    
    \item For the single-user system, we reduce the rotation-delay-aware joint design to a one-dimensional rotation problem by exploiting the optimal beamforming and fixed-budget RA pointing structures.
    Under a sufficient condition for strict concavity, the global optimum is unique and falls into three regimes: no rotation, partial rotation, and full rotation.
    We further derive a closed-form approximation through a second-order expansion, whose effective-rate performance is evaluated numerically.
    
    \item For the multi-user system, we reveal a separable structure in the rotation-delay-aware joint problem.
    The problem can be equivalently decomposed into an outer optimization over the common rotation budget and a fixed-budget max-min signal-to-interference-plus-noise ratio (SINR) optimization.
    For a given rotation budget, we impose a zero-forcing (ZF) structure so that the
    beamforming directions are determined by the RA orientations, while the
    remaining max-min fair power allocation is obtained in closed form,
    reducing the fixed-budget design to a Gram-matrix-based RA pointing problem.
    Building on this structure, we derive analytical gradients and spherical-cap feasible updates, leading to the proposed low-complexity analytic-gradient zero-forcing (AG-ZF) algorithm.
    
    \item Numerical results verify the importance of accounting for mechanical rotation delay in finite-slot UAV communications.
    The closed-form design achieves near-optimal effective-rate performance, while AG-ZF achieves favorable effective-rate performance across different multi-user settings.
    Theoretical complexity analysis further demonstrates the computational advantage of AG-ZF.
\end{itemize}

The remainder of this paper is organized as follows.
Section II presents the system model and defines the effective transmission rate with mechanical rotation delay.
Section III studies the single-user system, characterizes the optimal rotation structure, and derives an approximate closed-form solution.
Section IV considers the multi-user system, develops the AG-ZF algorithm, and analyzes its computational complexity.
Section V presents numerical results to evaluate the proposed rotation-delay-aware design and optimization algorithms.
Finally, Section VI concludes the paper.

\textit{Notation:}
Boldface uppercase and lowercase letters denote matrices and column vectors, respectively.
The superscripts $(\cdot)^T$, $(\cdot)^*$, $(\cdot)^H$, and $(\cdot)^{-1}$ denote the transpose, complex conjugate, conjugate transpose, and matrix inverse, respectively.
$\mathbb{R}^{a\times b}$ and $\mathbb{C}^{a\times b}$ denote the sets of $a\times b$ real and complex matrices, respectively.
For a vector $\mathbf{x}$, $\|\mathbf{x}\|_2$ denotes its Euclidean norm.
For a matrix $\mathbf{X}$, $\operatorname{Tr}(\mathbf{X})$, $\operatorname{rank}(\mathbf{X})$, and $[\mathbf{X}]_{i,j}$ denote its trace, rank, and $(i,j)$-th element, respectively.
$\operatorname{diag}(x_1,\ldots,x_n)$ denotes a diagonal matrix with diagonal entries $x_1,\ldots,x_n$, and $\mathbf{I}_n$ denotes the $n\times n$ identity matrix.
$\mathbb{E}[\cdot]$ denotes expectation, and $\Re\{\cdot\}$ denotes the real part of a complex number.
$[x]_+\triangleq\max\{x,0\}$ denotes the positive-part operator.
$\nabla_{\mathbf{x}}f$ denotes the Euclidean gradient of $f$ with respect to $\mathbf{x}$.
$\mathcal{CN}(\mu,\sigma^2)$ denotes a circularly symmetric complex Gaussian distribution with mean $\mu$ and variance $\sigma^2$.
$j=\sqrt{-1}$ denotes the imaginary unit, and $\mathcal{O}(\cdot)$ denotes the standard big-$O$ notation.

\section{SYSTEM MODEL}
\label{sec:system_model}

As illustrated in Fig.~\ref{fig场景图}, we consider a UAV downlink communication system consisting of one transmitting UAV and $K$ users. The transmitting UAV is equipped with a uniform planar array (UPA) of directional RAs and simultaneously transmits independent data streams to the users. Each user is equipped with a single isotropic antenna.
The transmitting UAV follows a predetermined trajectory.
Following the commonly used trajectory discretization approach for UAV communications, the continuous flight process is divided into communication slots of duration $T_s>0$. Each slot is chosen sufficiently short such that the relative UAV-user geometry can be approximated as quasi-static, and the UAV and user positions are treated as fixed within the slot~\cite{ZengMobileRelay,ZengUAVTutorial}. In this work, we focus on a given communication slot and jointly design the RA orientations and transmit beamforming.

\subsection{Channel and Signal Model}
\label{subsec:channel_signal_model}
We establish a fixed three-dimensional (3D) Cartesian coordinate system with the $z$-axis pointing vertically upward. The positions of the transmitting UAV and the $k$th user are denoted by
$\mathbf p=[p_x,p_y,p_z]^T\in\mathbb R^3$ and $\mathbf u_k=[u_{k,x},u_{k,y},u_{k,z}]^T\in\mathbb R^3$, where $k\in\mathcal K\triangleq\{1,\ldots,K\}$.
We assume that all users are located below the transmitting UAV (i.e.,
$u_{k,z}<p_z,\ \forall k\in\mathcal K$).
Their positions are treated as fixed within the considered communication slot.

The transmitting UAV is equipped with a UPA consisting of $M=M_xM_y$ directional RAs. The array is parallel to the $x$-$y$ plane and centered at $\mathbf p$. The separation between adjacent RAs is denoted by $d$. The $m$th RA is associated with the centered two-dimensional index $(m_x,m_y)$, where $m_x\in\left\{-\frac{M_x-1}{2},\ldots,\frac{M_x-1}{2}\right\}$ and $m_y\in\left\{-\frac{M_y-1}{2},\ldots,\frac{M_y-1}{2}\right\}$, with $m\in\mathcal M\triangleq\{1,\ldots,M\}$. Accordingly, the position of the $m$th RA relative to the array center is $\mathbf q_m=[m_xd,m_yd,0]^T$, and its absolute position is $\mathbf p_m=\mathbf p+\mathbf q_m$.

We adopt an RA architecture in which the boresight direction of each antenna element can be independently adjusted in three-dimensional space~\cite{ZhengBeixiongRA,WangjiRAIndependent,MatiantianRAIndependent,YeChengzhiRAIndependent}.
For the considered communication slot, all RAs are assumed to start from a common reference orientation. 
Since the users are located below the transmitting UAV, this reference orientation is set to the vertically downward direction, represented by $\mathbf f_0=[0,0,-1]^T$.
Extending the model to consecutive communication slots would generally require the initial orientation of each RA to depend on its terminal orientation in the preceding slot, leading to a state-dependent inter-slot optimization problem, which is beyond the scope of this work.

Let $\theta_m$ denote the angular deviation of the $m$th RA from the reference orientation $\mathbf f_0$, and let $\phi_m$ denote its azimuth angle. Its target boresight direction is represented by the unit vector
\begin{equation}
	\mathbf f_m(\theta_m,\phi_m)
	=
	\begin{bmatrix}
		\sin\theta_m\cos\phi_m\\
		\sin\theta_m\sin\phi_m\\
		-\cos\theta_m
	\end{bmatrix},
	\label{eq:RA_pointing_vector}
\end{equation}
where $0\leq\theta_m\leq\theta_{\max}$ and $-\pi\leq\phi_m\leq\pi$. Here, $\theta_{\max}$ denotes the maximum allowable value of $\theta_m$, with $\theta_{\max}\leq\pi/2$. The orientation variables of all RAs are collected in $\boldsymbol{\vartheta}=[\theta_1,\phi_1,\ldots,\theta_M,\phi_M]^T$.

The distance and unit direction vector from the $m$th RA to the $k$th user are given by
$r_{k,m}=\|\mathbf u_k-\mathbf p_m\|_2$
and
$\mathbf d_{k,m}=(\mathbf u_k-\mathbf p_m)/r_{k,m}$,
respectively.
Let $\epsilon_{k,m}$ denote the angle between the boresight direction of the $m$th RA and $\mathbf d_{k,m}$. Then,
$\cos\epsilon_{k,m}
=\mathbf f_m^T(\theta_m,\phi_m)\mathbf d_{k,m}$.

We adopt a widely used rotationally symmetric directional antenna pattern~\cite{BalanisAntenna}, with the antenna gain modeled as
\begin{equation}
	G(\epsilon)
	=
	\begin{cases}
		G_0\cos^{2p}(\epsilon),
		& 0\leq\epsilon\leq \dfrac{\pi}{2},
		\\[1mm]
		0,
		& \text{otherwise},
	\end{cases}
	\label{eq:antenna_pattern}
\end{equation}
where $p>0$ is the antenna directionality parameter, and
$G_0=2(2p+1)$ is the peak antenna gain chosen to satisfy the power-normalization condition.
As $p$ increases, the main lobe becomes narrower and the radiation pattern becomes more directive.
Accordingly, the antenna gain of the $m$th RA toward the $k$th user is
$G_{k,m}(\boldsymbol{\vartheta})=G(\epsilon_{k,m})$.

Assuming narrowband free-space line-of-sight (LoS) propagation~\cite{XuUAV,BalanisAntenna}, the channel coefficient from the $m$th RA to the $k$th user is
\begin{equation}
	h_{k,m}(\boldsymbol{\vartheta})
	=
	\frac{\lambda}{4\pi r_{k,m}}
	\sqrt{G_{k,m}(\boldsymbol{\vartheta})}
	\exp\left(
	j\frac{2\pi}{\lambda}r_{k,m}
	\right),
	\label{eq:channel_km}
\end{equation}
where $\lambda$ denotes the carrier wavelength.
Define the free-space path gain from
the $m$th RA to the $k$th user as $\beta_{k,m}\triangleq\left(\frac{\lambda}{4\pi r_{k,m}}\right)^2$.
The channel vector from the transmitting UAV to the $k$th user is then given by
\begin{equation}
	\mathbf h_k(\boldsymbol{\vartheta})
	=
	\left[
	h_{k,1}(\boldsymbol{\vartheta}),
	\ldots,
	h_{k,M}(\boldsymbol{\vartheta})
	\right]^T
	\in\mathbb C^{M\times1}.
	\label{eq:channel_vector}
\end{equation}

For the subsequent theoretical analysis, we further assume far-field propagation.
Accordingly, the unit direction vectors and free-space path gains across the transmit array can be approximated as identical for each user~\cite{BalanisAntenna}, i.e.,
\begin{equation}
	\mathbf d_{k,m}\simeq \mathbf d_k, 
	\beta_{k,m}\simeq \beta_k, \forall m\in\mathcal M,
	\label{eq:far_field_approx}
\end{equation}
where $\mathbf d_k
	=
	\frac{\mathbf u_k-\mathbf p}
	{\|\mathbf u_k-\mathbf p\|_2}$
and
$\beta_k
	=
	\left(
	\frac{\lambda}
	{4\pi\|\mathbf u_k-\mathbf p\|_2}
	\right)^2$.
This far-field approximation is used in the subsequent single-user analysis to further simplify the relationship between the RA orientations and the channel gains.

The received signal at the $k$th user is
\begin{equation}
	y_k
	=
	\mathbf h_k^H(\boldsymbol{\vartheta})\mathbf w_k s_k
	+
	\sum_{\substack{j=1\\j\neq k}}^{K}
	\mathbf h_k^H(\boldsymbol{\vartheta})\mathbf w_j s_j
	+
	z_k,
	\label{eq:received_signal}
\end{equation}
where $s_k$ denotes the information symbol intended for the $k$th user and is normalized such that $\mathbb E[|s_k|^2]=1$, $\mathbf w_k\in\mathbb C^{M\times1}$ is the corresponding beamforming vector, and $z_k\sim\mathcal{CN}(0,\sigma_k^2)$ denotes the additive white Gaussian noise (AWGN). The first term in \eqref{eq:received_signal} is the desired signal, while the second term represents multi-user interference.

Define the beamforming matrix as $\mathbf W=[\mathbf w_1,\ldots,\mathbf w_K]\in\mathbb C^{M\times K}$. The transmitted baseband signal is $\mathbf x=\sum_{k=1}^{K}\mathbf w_k s_k$, subject to the total transmit power constraint $\sum_{k=1}^{K}\|\mathbf w_k\|_2^2\leq P_{\max}$.

Accordingly, the SINR at the $k$th user is
\begin{equation}
	\gamma_k(\mathbf W,\boldsymbol{\vartheta})
	=
	\frac{
		\left|
		\mathbf h_k^H(\boldsymbol{\vartheta})\mathbf w_k
		\right|^2
	}{
		\displaystyle
		\sum_{\substack{j=1\\j\neq k}}^{K}
		\left|
		\mathbf h_k^H(\boldsymbol{\vartheta})\mathbf w_j
		\right|^2
		+\sigma_k^2
	},
	\label{eq:SINR}
\end{equation}
where the RA orientations $\boldsymbol{\vartheta}$ affect both the desired signal gain and the multi-user interference through $\mathbf h_k(\boldsymbol{\vartheta})$.

The achievable rate of the link between the transmitting UAV and the $k$th user after the RA reconfiguration is
\begin{equation}
	R_k(\mathbf W,\boldsymbol{\vartheta})
	=
	B\log_2
	\left(
	1+\gamma_k(\mathbf W,\boldsymbol{\vartheta})
	\right),
	\label{eq:instantaneous_rate}
\end{equation}
where $B$ denotes the system bandwidth.
This rate does not account for the time spent on mechanical rotation.

\subsection{Rotation Delay Model and Problem Formulation}
\label{subsec:delay_model}
Since the effective channels vary continuously with the RA orientations
during mechanical rotation, we assume that data transmission is suspended
during RA reconfiguration and resumes once the target orientations are reached.
We model the resulting reconfiguration delay by assuming that all RAs rotate
independently and in parallel at the same constant angular speed $\omega$.
Hence, the reconfiguration time is determined by the largest rotation angle,
i.e., $\max_{m\in\mathcal M}\theta_m/\omega$.

To facilitate the subsequent optimization, we introduce a common rotation budget $\delta$ satisfying $\theta_m\leq\delta$, $\forall m\in\mathcal M$. Accordingly, a duration of $\delta/\omega$ is allocated to RA reconfiguration, and the transmission time fraction is defined as
\begin{equation}
	A(\delta)
	=
	\frac{T_s-\delta/\omega}{T_s}
	=
	1-\frac{\delta}{\omega T_s},
	\label{eq:data_time_fraction}
\end{equation}
where $\delta/\omega\leq T_s$ is required for feasibility.

Accounting for the rotation time, the effective transmission rate of the link between the transmitting UAV and the $k$th user is defined as
\begin{equation}
	\bar{R}_k(\mathbf W,\boldsymbol{\vartheta},\delta)
	=
	A(\delta)
	B\log_2
	\left(
	1+\gamma_k(\mathbf W,\boldsymbol{\vartheta})
	\right).
	\label{eq:effective_rate}
\end{equation}

Increasing $\delta$ enlarges the feasible set of RA orientations, but reduces the time available for data transmission.

We aim to maximize the minimum effective transmission rate among all users by jointly optimizing the transmit beamforming matrix $\mathbf W$, the RA orientations $\boldsymbol{\vartheta}$, and the common rotation budget $\delta$.
From \eqref{eq:effective_rate}, since all users share the same transmission time fraction $A(\delta)$ and $\log_2(1+x)$ is monotonically increasing in $x$, we have
\[
	\min_{k\in\mathcal K}\bar{R}_k
	=
	A(\delta)B
	\log_2\left(
	1+\min_{k\in\mathcal K}
	\gamma_k(\mathbf W,\boldsymbol{\vartheta})
	\right).
\]
The joint optimization problem is therefore formulated as
\refstepcounter{prob}
\label{prob:P1}
\begin{subequations}
	\begin{align}
		(\theprob):\quad
		\max_{\mathbf W,\boldsymbol{\vartheta},\delta}\quad
		&
		A(\delta)B
		\log_2\left(
		1+\min_{k\in\mathcal K}
		\gamma_k(\mathbf W,\boldsymbol{\vartheta})
		\right)
		\label{prob:P1_obj}
		\\
		\mathrm{s.t.}\quad
		\mathrm{C1}:~&
		\sum_{k=1}^{K}\|\mathbf w_k\|_2^2
		\leq P_{\max},
		\label{prob:P1_C1}
		\\
		\mathrm{C2}:~&
		0\leq\theta_m\leq\delta,
		\qquad \forall m\in\mathcal M,
		\label{prob:P1_C2}
		\\
		\mathrm{C3}:~&
		-\pi\leq\phi_m\leq\pi,
		\qquad \forall m\in\mathcal M,
		\label{prob:P1_C3}
		\\
		\mathrm{C4}:~&
		0\leq\delta\leq\delta_{\max},
		\label{prob:P1_C4}
	\end{align}
\end{subequations}
where $\delta_{\max}\triangleq\min\{\theta_{\max},\omega T_s\}$.
Constraint $\mathrm{C1}$ imposes the total transmit power limit.
Constraints $\mathrm{C2}$ and $\mathrm{C3}$ specify the feasible RA orientations.
Constraint $\mathrm{C4}$ accounts for both the maximum mechanical rotation angle and the requirement $\delta/\omega\leq T_s$.

Since
$A(\delta)=1-\delta/(\omega T_s)$
is monotonically decreasing in $\delta$, for any given
$\mathbf W$ and $\boldsymbol{\vartheta}$, the objective is maximized by the smallest feasible $\delta$ satisfying $\mathrm{C2}$.
Therefore, at the optimum of problem~\eqref{prob:P1},
\begin{equation}
	\delta^\star
	=
	\max_{m\in\mathcal M}\theta_m^\star.
	\label{eq:optimal_delta}
\end{equation}

Thus, the rotation budget constraint is tight at the optimum, and $\delta^\star$ coincides with the actual maximum rotation angle.

Problem~\eqref{prob:P1} is nonconvex due to the coupling among the beamforming matrix $\mathbf W$, the RA orientations $\boldsymbol{\vartheta}$, and the rotation budget $\delta$. In particular, $\boldsymbol{\vartheta}$ and $\mathbf W$ jointly determine the multi-user SINRs, while $\delta$ affects both the feasible RA orientations and the transmission time fraction $A(\delta)$. Hence, problem~\eqref{prob:P1} is difficult to solve directly. In the following, we exploit its structure for the single-user and multi-user systems separately.
\section{Single-User System}

We first consider the single-user system (i.e., $K=1$) to characterize the tradeoff between the gain from RA rotation and the mechanical rotation delay in problem~$(\mathrm{P1})$.
For notational simplicity, the user index is omitted in the single-user system, e.g., $r_m$, $\mathbf d_m$, and $\beta_m$ are used in place of $r_{1,m}$, $\mathbf d_{1,m}$, and $\beta_{1,m}$, respectively.
Problem \eqref{prob:P1} then reduces to
\refstepcounter{prob}
\label{prob:P2}
\begin{subequations}
	\begin{align}
		(\theprob):\quad
		\max_{\mathbf w,\boldsymbol{\vartheta},\delta}\quad
		&
		A(\delta)B
		\log_2
		\left(
		1+
		\frac{
			\left|
			\mathbf h^H(\boldsymbol{\vartheta})\mathbf w
			\right|^2
		}{
			\sigma^2
		}
		\right)
		\label{prob:P2_obj}
		\\
		\mathrm{s.t.}\quad
		&
		\|\mathbf w\|_2^2\leq P_{\max},
		\label{prob:P2_power}
		\\
		&
		0\leq\theta_m\leq\delta,
		\quad \forall m\in\mathcal M,
		\label{prob:P2_theta}
		\\
		&
		-\pi\leq\phi_m\leq\pi,
		\quad \forall m\in\mathcal M,
		\label{prob:P2_phi}
		\\
		&
		0\leq\delta\leq\delta_{\max}.
		\label{prob:P2_delta}
	\end{align}
\end{subequations}
\subsection{Problem Reformulation}
\label{subsec:single_reformulation}

For any given RA orientations $\boldsymbol{\vartheta}$, the channel vector $\mathbf h(\boldsymbol{\vartheta})$ is fixed. By the Cauchy--Schwarz inequality,
$\left|
\mathbf h^H(\boldsymbol{\vartheta})\mathbf w
\right|^2
\leq
\|\mathbf h(\boldsymbol{\vartheta})\|_2^2
\|\mathbf w\|_2^2
\leq
P_{\max}
\|\mathbf h(\boldsymbol{\vartheta})\|_2^2$.
Hence, when $\mathbf h(\boldsymbol{\vartheta})\neq\mathbf 0$, the optimal transmit beamforming vector in problem~$(\mathrm{P2})$ is given by the maximum-ratio transmission (MRT) beamformer,
\begin{equation}
	\mathbf w^{\star}
	=
	\sqrt{P_{\max}}
	\frac{
		\mathbf h(\boldsymbol{\vartheta})
	}{
		\|\mathbf h(\boldsymbol{\vartheta})\|_2
	}.
	\label{eq:single_mrt}
\end{equation}

Substituting \eqref{eq:single_mrt} into \eqref{prob:P2_obj}, the maximum signal-to-noise ratio (SNR) for given RA orientations is
\begin{equation}
	\gamma(\boldsymbol{\vartheta})
	=
	\frac{P_{\max}}{\sigma^2}
	\|\mathbf h(\boldsymbol{\vartheta})\|_2^2.
	\label{eq:single_snr_pointing}
\end{equation}

Combining \eqref{eq:antenna_pattern}, \eqref{eq:channel_km}, and \eqref{eq:single_snr_pointing}, the instantaneous SNR can be expressed as
\begin{equation}
	\gamma(\boldsymbol{\vartheta})
	=
	\frac{P_{\max}G_0}{\sigma^2}
	\sum_{m=1}^{M}
	\beta_m
	\left[
		\mathbf f_m^T(\theta_m,\phi_m)\mathbf d_m
	\right]_+^{2p},
	\label{eq:single_snr_sum}
\end{equation}
where
$\beta_m\triangleq
\left(\frac{\lambda}{4\pi r_m}\right)^2$
denotes the free-space path gain of the $m$th RA.
For a given common rotation budget $\delta$, the gain of each RA in \eqref{eq:single_snr_sum} can be optimized independently.

To characterize the resulting optimal orientation, let $\alpha_m$ denote the angle between $\mathbf f_0$ and $\mathbf d_m$, defined as
\begin{equation}
	\alpha_m
	\triangleq
	\arccos
	\left(
		\mathbf f_0^T\mathbf d_m
	\right),
	\qquad
	0\leq\alpha_m<\frac{\pi}{2},
	\label{eq:single_alpha_m}
\end{equation}
which is referred to as the initial off-boresight angle of the $m$th RA.

For a given rotation budget $\delta$, $A(\delta)$ is constant. Therefore, maximizing the effective transmission rate is equivalent to maximizing the instantaneous SNR in \eqref{eq:single_snr_sum}. The SNR in \eqref{eq:single_snr_sum} is a sum of gain terms associated with individual RAs, while the orientation constraints are independent across the RAs. Hence, the orientation of each RA can be optimized separately. The resulting optimization for each RA has the same structure as the single-user RA pointing problem in \cite{ZhengBeixiongRA}. Accordingly, the optimal azimuth and deflection angles of the $m$th RA for a given $\delta$ are given by
\begin{equation}
	\phi_m^\star=
	\operatorname{atan2}(d_{m,y},d_{m,x}),
    \label{eq:single_opt_phi}
\end{equation}

\begin{equation}
	\theta_m^\star(\delta)=
	\min\{\alpha_m,\delta\}.
    \label{eq:single_opt_theta}
\end{equation}

When $\theta_m^\star(\delta)=0$, the azimuth angle is immaterial and can be chosen arbitrarily. For a given $\delta$, the residual off-boresight angle under the SNR-maximizing orientation is
\begin{equation}
	\varepsilon_m^\star(\delta)
	=
	[\alpha_m-\delta]_+.
	\label{eq:single_residual_misalignment}
\end{equation}

Substituting these orientations into \eqref{eq:single_snr_sum} yields the maximum instantaneous SNR for a given $\delta$ as

\begin{equation}
	\gamma^\star(\delta)
	=
	\frac{P_{\max}G_0}{\sigma^2}
	\sum_{m=1}^{M}
	\beta_m
	\cos^{2p}
	\left(
		\varepsilon_m^\star(\delta)
	\right).
	\label{eq:single_opt_snr_delta}
\end{equation}

From \eqref{eq:single_opt_theta}, when $\delta\geq\max_{m\in\mathcal M}\alpha_m$, all RAs are fully aligned with the user. Further increasing $\delta$ does not improve the maximum SNR, but it reduces the time available for data transmission. Therefore, it suffices to optimize the rotation budget over
\begin{equation}
	0\leq\delta\leq\bar{\delta},
	\qquad
	\bar{\delta}
	\triangleq
	\min
	\left\{
		\delta_{\max},
		\max_{m\in\mathcal M}\alpha_m
	\right\}.
	\label{eq:single_delta_bar}
\end{equation}

For any given $\delta$ in this interval, the optimal transmit beamformer and RA orientations are already determined by the preceding results. Hence, problem~$(\mathrm{P2})$ can be equivalently reformulated as
\refstepcounter{prob}
\label{prob:P3}
\begin{equation}
    (\theprob):\quad
    \max_{0\leq\delta\leq\bar{\delta}}
    \;
    A(\delta)B
    \log_2\left(1+\gamma^\star(\delta)\right),
    \label{eq:P3_obj}
\end{equation}
where $\gamma^\star(\delta)$ is given in
\eqref{eq:single_opt_snr_delta}.
Thus, the original joint optimization over the transmit beamforming vector, the $M$ RA orientations, and the rotation budget is reduced to a one-dimensional optimization over $\delta$.

\subsection{Optimal Rotation Design Under the Far-Field Approximation}

Applying the far-field approximation in \eqref{eq:far_field_approx} and noting that all RAs share the same initial boresight direction $\mathbf f_0$, we obtain
\begin{equation}
	\alpha_m
	=
	\arccos(\mathbf f_0^T\mathbf d_m)
	\simeq
	\alpha,
	\qquad
	\beta_m\simeq\beta,
	\quad m\in\mathcal M.
	\label{eq:single_far_field_parameters}
\end{equation}

Accordingly, the upper bound in \eqref{eq:single_delta_bar} reduces to
$\bar{\delta}=\min\{\delta_{\max},\alpha\}$.
From \eqref{eq:single_residual_misalignment}, the residual off-boresight angle of each RA reduces to $[\alpha-\delta]_+$. Since $0\leq\delta\leq\bar{\delta}\leq\alpha$, we have $[\alpha-\delta]_+=\alpha-\delta$. Hence, \eqref{eq:single_opt_snr_delta} can be simplified as
\begin{equation}
	\gamma^\star_{\mathrm{FF}}(\delta)
	=
	C\cos^{2p}(\alpha-\delta),
	\qquad
	0\leq\delta\leq\bar{\delta},
	\label{eq:far_field_snr}
\end{equation}
where $C\triangleq P_{\max}G_0M\beta/\sigma^2$ denotes the peak-aligned SNR, corresponding to perfect alignment of all RA boresights with the user under maximum-ratio transmission.

Substituting \eqref{eq:far_field_snr} into problem~$(\mathrm{P3})$ yields the following rotation optimization problem under the far-field approximation:
\refstepcounter{prob}
\label{prob:far_field}
\begin{equation}
    (\theprob):\quad
    \max_{0\leq\delta\leq\bar{\delta}}
    \bar{R}_{\mathrm{FF}}(\delta),
    \label{eq:far_field_problem}
\end{equation}
where
\begin{equation}
	\bar{R}_{\mathrm{FF}}(\delta)
	=
	\left(
	1-\frac{\delta}{\omega T_s}
	\right)
	B
	\log_2
	\left[
	1+C\cos^{2p}(\alpha-\delta)
	\right].
	\label{eq:far_field_rate}
\end{equation} 

Increasing the rotation budget has two opposing effects on the effective transmission rate. A larger $\delta$ reduces the residual off-boresight angle and thereby improves the instantaneous SNR after rotation. However, it also increases the mechanical rotation delay and  leaves less time for data transmission.

To further analyze problem~(\ref{prob:far_field}), define
	$q(\delta)
	\triangleq
	\cos^{2p}(\alpha-\delta)$.
The first-order derivative of \eqref{eq:far_field_rate} with respect to $\delta$ is given by
\begin{align}
	\bar{R}_{\mathrm{FF}}'(\delta)
	={}&
	-\frac{B}{\omega T_s\ln 2}
	\ln\left[1+Cq(\delta)\right]
	\nonumber\\
	&
	+
	\frac{B(\omega T_s-\delta)}{\omega T_s\ln 2}
	\frac{
		2Cp\,q(\delta)\tan(\alpha-\delta)
	}{
		1+Cq(\delta)
	}.
	\label{eq:far_field_rate_derivative}
\end{align}
The first term in \eqref{eq:far_field_rate_derivative} captures the rate loss caused by the reduction in data transmission time, whereas the second term captures the rate gain due to the SNR improvement resulting from the reduced off-boresight angle. Therefore, if the optimal solution lies in the interior of the feasible interval, it must satisfy
\begin{equation}
	(\omega T_s-\delta)
	\frac{
		2Cp\,q(\delta)\tan(\alpha-\delta)
	}{
		1+Cq(\delta)
	}
	=
	\ln\!\left[1+Cq(\delta)\right].
	\label{eq:marginal_balance}
\end{equation}
This condition indicates that, at an interior optimum, the rate gain resulting from the reduced off-boresight angle is exactly offset by the rate loss caused by the additional rotation delay.

We next characterize the globally optimal rotation design for
problem~(\ref{prob:far_field}).

\begin{theorem}
\label{thm:single_optimal_rotation}
For $p>0$, consider the common rotation budget $\delta\in[0,\bar{\delta}]$
and the common initial off-boresight angle $\alpha$ under the far-field approximation.
If
\begin{equation}
	1+
	C\cos^{2p}(\alpha-\delta)
	\geq
	2p\sin^2(\alpha-\delta),
	\qquad
	0\leq\delta\leq\bar{\delta},
	\label{eq:concavity_condition}
\end{equation}
then $\bar{R}_{\mathrm{FF}}(\delta)$ is strictly concave over
$[0,\bar{\delta}]$.
Consequently, problem~(\ref{prob:far_field}) has a unique global optimum given by
\begin{equation}
	\delta^\star
	=
	\begin{cases}
		0,
		&\bar{R}_{\mathrm{FF}}'(0)\leq 0,
		\\[1mm]
		\bar{\delta},
		& \bar{R}_{\mathrm{FF}}'(\bar{\delta})\geq 0,
		\\[1mm]
		\delta_0,
		& \text{otherwise},
	\end{cases}
	\label{eq:optimal_delta_exact}
\end{equation}
where $\delta_0\in(0,\bar{\delta})$ is the unique solution to
\eqref{eq:marginal_balance}.
\end{theorem}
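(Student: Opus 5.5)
The plan is to write $\bar R_{\mathrm{FF}}(\delta)=\frac{B}{\ln 2}\,a(\delta)\,g(\delta)$ with $a(\delta)\triangleq 1-\delta/(\omega T_s)$, which is affine and decreasing, and $g(\delta)\triangleq\ln[1+Cq(\delta)]$, $q(\delta)=\cos^{2p}(\alpha-\delta)$. Then
\[
\bar R_{\mathrm{FF}}''=\frac{B}{\ln 2}\left(2a'g'+a g''\right),
\]
since $a''=0$. By the derivative computed in \eqref{eq:far_field_rate_derivative}, $g'(\delta)=\frac{2Cp\,q\tan(\alpha-\delta)}{1+Cq}\ge 0$ on $[0,\bar\delta]$, because $0\le\alpha-\delta\le\alpha<\pi/2$. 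Together with $a'=-1/(\omega T_s)<0$, the cross term $2a'g'$ is nonpositive. It is strictly negative whenever $\alpha-\delta>0$. Also $a(\delta)>0$ except possibly at $\delta=\omega T_s$. Hence strict concavity follows once I show $g''<0$ on $[0,\bar\delta]$, or at least $g''\le 0$ with strictness coming from one of the two terms at every point.

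The main step is to compute $g''$. Set $x=\alpha-\delta\in[0,\pi/2)$ and $s(x)=\ln(1+C\cos^{2p}x)$, so that $g''(\delta)=s''(x)$. We have
\[
s'(x)=-\frac{2pCc\,\tan x}{1+Cc},\qquad c=\cos^{2p}x .
\]
Differentiating once more, and using $c'=-2p\,c\tan x$, gives
\[
s''(x)=-\frac{2pC c}{(1+Cc)^2}\Big[(1+Cc)\sec^2 x-2p\tan^2 x\Big].
\]
The sign depends on the bracket. Multiplying it by $\cos^2x>0$ gives $1+C\cos^{2p}x-2p\sin^2x$. This is exactly the quantity that \eqref{eq:concavity_condition} assumes is nonnegative, so $s''\le 0$. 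I expect this computation to be the main obstacle, and I will check it carefully, in particular the $-2p\tan^2 x$ term, which comes from $c'/(1+Cc)^2$.

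For strictness, at any $\delta$ with $\alpha-\delta>0$ the cross term is strictly negative, so $\bar R_{\mathrm{FF}}''<0$. At $\delta=\alpha$, which is possible only if $\bar\delta=\alpha$, the bracket equals $1+C>0$, so $g''<0$. At that point $a(\alpha)>0$ unless $\alpha=\omega T_s$. A negative second derivative except at an isolated endpoint still yields strict concavity on the closed interval, so strict concavity holds throughout.

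Given strict concavity, $\bar R_{\mathrm{FF}}'$ is strictly decreasing on $[0,\bar\delta]$, and the regimes follow directly. If $\bar R_{\mathrm{FF}}'(0)\le 0$, then $\bar R_{\mathrm{FF}}'<0$ on $(0,\bar\delta]$, so $\delta^\star=0$. If $\bar R_{\mathrm{FF}}'(\bar\delta)\ge 0$, then $\bar R_{\mathrm{FF}}'>0$ on $[0,\bar\delta)$, so $\delta^\star=\bar\delta$. These two cases are mutually exclusive by strict monotonicity, provided $\bar\delta>0$; the degenerate case $\bar\delta=0$ is trivial. Otherwise $\bar R_{\mathrm{FF}}'(0)>0>\bar R_{\mathrm{FF}}'(\bar\delta)$, and by continuity and strict monotonicity there is a unique zero $\delta_0\in(0,\bar\delta)$. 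Setting \eqref{eq:far_field_rate_derivative} to zero and multiplying by $\omega T_s\ln2/B$ gives exactly \eqref{eq:marginal_balance}, so $\delta_0$ is its unique root in the interval and is the unique global maximizer. Uniqueness of the optimum in all three cases follows from strict concavity.
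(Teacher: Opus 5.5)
Your proposal is correct and follows essentially the same route as the paper's proof. Both write $\bar R_{\mathrm{FF}}=\frac{B}{\ln 2}A(\delta)\ell(\delta)$, compute the second derivative of the logarithmic term with sign governed by $1+C\cos^{2p}(\alpha-\delta)-2p\sin^2(\alpha-\delta)$, get strictness from the cross term $-\frac{2}{\omega T_s}\ell'(\delta)<0$ for $\delta<\alpha$, and read off the three regimes from the strict monotonicity of $\bar R_{\mathrm{FF}}'$. The only difference is cosmetic: at the endpoint $\delta=\alpha$ you check $g''<0$ directly (the bracket equals $1+C$), whereas the paper appeals to continuity; both are valid.
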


\begin{IEEEproof}
See Appendix~\ref{app:proof_theorem1}. 
\end{IEEEproof}

Theorem~\ref{thm:single_optimal_rotation} further reveals three optimal rotation regimes determined by the tradeoff between the rate gain from further rotation and the rate loss caused by the additional rotation delay. If the rate gain from further rotation at $\delta=0$ does not exceed the corresponding rate loss, increasing the rotation budget cannot improve the effective transmission rate, and no rotation is optimal. Conversely, if the rate gain at the upper bound $\bar{\delta}$ is no smaller than the rate loss, the effective transmission rate increases throughout the feasible interval, and the optimal rotation budget is $\bar{\delta}$. 
Otherwise, the optimal solution lies in the interior of the feasible interval, corresponding to partial rotation, where the unique solution $\delta_0$ satisfies \eqref{eq:marginal_balance}. Since $\bar{R}_{\mathrm{FF}}(\delta)$ is strictly concave, this interior solution is the unique global optimum and can be efficiently obtained via a one-dimensional bisection search.

\subsection{Closed-Form Solution} 
\label{subsec:closed_form}
Theorem~\ref{thm:single_optimal_rotation} establishes the global solution structure of problem \eqref{prob:far_field} under the sufficient condition in \eqref{eq:concavity_condition}.
When the optimum lies in the interior of the feasible interval, however, the optimal rotation budget is determined by the transcendental equation~\eqref{eq:marginal_balance}, which generally admits no closed-form solution. We therefore derive a closed-form solution using a second-order expansion with respect to the residual off-boresight angle.

For convenience, define
$x\triangleq\alpha-\delta$ and
$a\triangleq\omega T_s-\alpha$,
where $x$ denotes the residual off-boresight angle after RA rotation.
Substituting $\delta=\alpha-x$ into \eqref{eq:far_field_rate}, we obtain
\begin{equation}
	\mathcal R(x)
	\triangleq
	\bar{R}_{\mathrm{FF}}(\alpha-x)
	=
	\frac{B}{\omega T_s\ln2}
	(a+x)
	\ln\!\left(
	1+C\cos^{2p}x
	\right),
	\label{eq:rate_x}
\end{equation}
where
$\alpha-\bar{\delta}\leq x\leq\alpha$.

To obtain a closed-form solution, we apply a second-order Taylor expansion to the logarithmic term around $x=0$. The resulting approximate objective has the following property.

\begin{theorem}
	\label{thm:closed_form_rotation}
	If $a=\omega T_s-\alpha>0$, the resulting approximate objective is strictly concave for $x\geq0$ and has a unique
	nonnegative stationary point given by
	\begin{equation}
		x_{\mathrm{cf}}
		=
		\frac{
			\sqrt{
				a^2+
				\frac{3(1+C)\ln(1+C)}{Cp}
			}
			-a
		}{3}.
		\label{eq:x_cf}
	\end{equation}
	Accordingly, the closed-form solution for the rotation budget is
	\begin{equation}
		\delta_{\mathrm{cf}}
		=
		\Pi_{[0,\bar{\delta}]}
		\left(
		\alpha-x_{\mathrm{cf}}
		\right),
		\label{eq:delta_cf}
	\end{equation}
	where $\Pi_{[0,\bar{\delta}]}(\cdot)$ denotes the projection onto
	$[0,\bar{\delta}]$.
\end{theorem}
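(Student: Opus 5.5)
The plan is to make the second-order expansion explicit, differentiate the resulting cubic surrogate in closed form, and then map its maximizer back to the $\delta$-domain. Throughout, write $L\triangleq\ln(1+C)$ and $b\triangleq Cp/(1+C)>0$; note $L>0$ since $C>0$.

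\textbf{Step 1: the expansion.} First expand the inner term, $\cos^{2p}x = 1-px^2+\mathcal O(x^4)$, so that
\[
1+C\cos^{2p}x = (1+C)\Bigl(1-\tfrac{Cp}{1+C}x^2\Bigr)+\mathcal O(x^4).
\]
Taking logarithms and keeping terms up to second order around $x=0$ gives
\[
\ln\bigl(1+C\cos^{2p}x\bigr)\approx L-bx^2 .
\]
I would state this explicitly as the meaning of ``the resulting approximate objective''. Substituting into \eqref{eq:rate_x} and dropping the positive constant $B/(\omega T_s\ln 2)$, the approximate objective is
\[
\tilde{\mathcal R}(x)=(a+x)(L-bx^2).
\]

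\textbf{Step 2: concavity and the stationary point.} Next differentiate:
\[
\tilde{\mathcal R}'(x)=L-bx^2-2bx(a+x)=L-2abx-3bx^2,
\qquad
\tilde{\mathcal R}''(x)=-2ab-6bx .
\]
Since $a>0$ and $b>0$, $\tilde{\mathcal R}''(x)<0$ for every $x\ge 0$, which gives strict concavity on $x\ge0$. The stationarity condition is the quadratic $3bx^2+2abx-L=0$. Its two roots have product $-L/(3b)<0$, so exactly one root is positive and the other is negative. Solving and simplifying by the common factor $2b$ gives
\[
x=\frac{-a+\sqrt{a^2+3L/b}}{3}.
\]
With $3L/b=3(1+C)\ln(1+C)/(Cp)$ this is exactly \eqref{eq:x_cf}. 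It is the unique nonnegative stationary point.

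\textbf{Step 3: projection onto the feasible interval.} The feasible range in $x$ is $[\alpha-\bar\delta,\alpha]\subset[0,\alpha]$, using $\bar\delta\le\alpha$. On this interval $\tilde{\mathcal R}$ is strictly concave with unconstrained maximizer $x_{\mathrm{cf}}$ on $x\ge0$. Its maximizer over the interval is therefore the clipping of $x_{\mathrm{cf}}$ to $[\alpha-\bar\delta,\alpha]$: the derivative is positive left of $x_{\mathrm{cf}}$ and negative right of it. The affine, orientation-reversing map $\delta=\alpha-x$ sends this interval onto $[0,\bar\delta]$ and commutes with clipping. Hence $\delta_{\mathrm{cf}}=\Pi_{[0,\bar\delta]}(\alpha-x_{\mathrm{cf}})$, which is \eqref{eq:delta_cf}.

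\textbf{Main obstacle.} The only delicate point is interpretive rather than technical: making precise which quantity is expanded. One could expand the full $\ln(1+C\cos^{2p}x)$ as a function of $x$, or expand in $\cos^{2p}x$ first. I will fix the expansion to be the second-order Taylor polynomial in $x$ at $0$, since both routes give the same $L-bx^2$. Once that is fixed, the rest is elementary calculus, and I would only check that the sign argument uses $a>0$ and not merely $a\ge0$.
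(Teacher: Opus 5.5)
Your proposal is correct and follows essentially the same route as the paper's proof: the same second-order surrogate $(a+x)(L-bx^2)$, the same derivative computation giving strict concavity from $a>0$ and $b>0$, the same quadratic stationarity condition, and the same clipping argument mapped back through $\delta=\alpha-x$. The differences are minor. You obtain the coefficient $b$ by composing the series of $\cos^{2p}x$ and $\ln$, whereas the paper computes $\psi''(0)$ directly; you use the product of the roots where the paper compares $\sqrt{a^2+3L/\kappa}$ with $a$; and you state explicitly that clipping commutes with the orientation-reversing affine map.
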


\begin{IEEEproof}
    See Appendix~\ref{app:proof_theorem2}.
\end{IEEEproof}

Theorem~\ref{thm:closed_form_rotation} yields a closed-form rotation budget without numerically solving \eqref{eq:marginal_balance}. The accuracy and effective-rate performance of the closed-form solution are evaluated numerically in Section \ref{SimulationResults}.

\section{Multi-User System}
\label{sec:multiuser}

We next consider the general multi-user system with $K>1$.
Unlike the single-user system, the orientation of each RA affects the channels of multiple users and is coupled with the transmit beamforming design.
Therefore, for a given rotation budget, the RA orientations can no longer be optimized independently.

For a given rotation budget $\delta$, define the maximum achievable minimum SINR as
\refstepcounter{prob}
\label{prob:fixed_delta}
\begin{subequations}
	\begin{align}
	(\theprob):\quad
	g(\delta)
	\triangleq
	\max_{\mathbf W,\boldsymbol{\vartheta}}
	\quad&
	\min_{k\in\mathcal K}
	\gamma_k(\mathbf W,\boldsymbol{\vartheta})
	\label{prob:fixed_delta_obj}
	\\
	\mathrm{s.t.}\quad
	&
	\sum_{k=1}^{K}\|\mathbf w_k\|_2^2
	\le P_{\max},
	\label{prob:fixed_delta_power}
	\\
	&
	0\le\theta_m\le\delta,
	\qquad \forall m\in\mathcal M,
	\label{prob:fixed_delta_theta}
	\\
	&
	-\pi\le\phi_m\le\pi,
	\qquad \forall m\in\mathcal M.
	\label{prob:fixed_delta_phi}
	\end{align}
\end{subequations}

With this definition, problem~\eqref{prob:P1} can be equivalently reformulated as
\refstepcounter{prob}
\label{prob:outer_delta}
\begin{equation}
	(\theprob):\quad
	\max_{0\le\delta\le\delta_{\max}}
	A(\delta)B
	\log_2\!\left(1+g(\delta)\right).
	\label{eq:multi_outer_problem}
\end{equation}

This reformulation separates the optimization into an inner communication design for a fixed $\delta$ and an outer optimization over the rotation budget.
For $\delta_1\leq\delta_2$, the feasible sets of
problem~\eqref{prob:fixed_delta} satisfy
$\mathcal F(\delta_1)\subseteq\mathcal F(\delta_2)$.
Therefore,
$g(\delta_1)\leq g(\delta_2)$,
and hence $g(\delta)$ is nondecreasing in $\delta$.

However, evaluating $g(\delta)$ for a given $\delta$ still requires the
joint optimization of transmit beamforming and RA orientations, which is
difficult due to their strong coupling.
To obtain a tractable solution, we restrict the transmit beamforming to
a zero-forcing (ZF) structure and denote the corresponding optimal value by
$g_{\rm ZF}(\delta)$.
Since ZF imposes an additional constraint on the transmit beamforming,
we have
\begin{equation}
	g_{\rm ZF}(\delta)\leq g(\delta)
\end{equation}
whenever ZF is feasible.
We next exploit the ZF structure to eliminate the beamforming and power
allocation variables, thereby reducing the fixed-budget problem to an
optimization over the RA orientations.

\subsection{ZF-Based RA Pointing Optimization}
\label{subsec:zf_pointing}

For a given rotation budget $\delta$, let
$\mathbf F=[\mathbf f_1,\ldots,\mathbf f_M]$
denote the RA orientation matrix,
$\mathbf H(\mathbf F)
=[\mathbf h_1(\mathbf F),\ldots,\mathbf h_K(\mathbf F)]$
denote the multi-user channel matrix, and
$\mathbf G(\mathbf F)=\mathbf H^H(\mathbf F)\mathbf H(\mathbf F)$
denote the corresponding Gram matrix.

When $M\geq K$ and $\operatorname{rank}(\mathbf H)=K$,
ZF beamforming can eliminate multi-user interference.
Define the unnormalized ZF beamforming matrix as
$\widetilde{\mathbf W}
=\mathbf H\mathbf G^{-1}
=[\widetilde{\mathbf w}_1,\ldots,\widetilde{\mathbf w}_K]$,
and let
$\mathbf w_k=\sqrt{q_k}\widetilde{\mathbf w}_k$,
where $q_k\geq0$ is the power coefficient.
Since
$\mathbf H^H\widetilde{\mathbf W}=\mathbf I_K$,
the SINR of user $k$ is $q_k/\sigma_k^2$, while the total transmit power is
$\sum_{k=1}^{K}q_k[\mathbf G^{-1}]_{k,k}$.
Under the max-min fairness criterion, the optimal power allocation equalizes
the ZF SINRs, i.e., $q_k^\star=\eta_{\rm ZF}\sigma_k^2$.
Defining
$\mathbf D=\operatorname{diag}(\sigma_1^2,\ldots,\sigma_K^2)$,
the maximum common ZF SINR is
\begin{equation}
	\eta_{\rm ZF}(\mathbf F)
	=
	\frac{P_{\max}}
	{\operatorname{Tr}
	\left[
	\mathbf D\mathbf G^{-1}(\mathbf F)
	\right]} .
	\label{eq:zf_common_sinr}
\end{equation}

It follows from \eqref{eq:zf_common_sinr} that, for a given $\delta$,
maximizing the common ZF SINR is equivalent to minimizing
$\operatorname{Tr}[\mathbf D\mathbf G^{-1}(\mathbf F)]$.
Therefore, for a given $\delta$, the inner problem~\eqref{prob:fixed_delta} under ZF beamforming can be reduced to the following RA pointing optimization:
\refstepcounter{prob}
\label{prob:zf_pointing}
\begin{subequations}
	\begin{align}
	(\theprob):\quad
	\min_{\mathbf F}\quad&
	J(\mathbf F)
	\triangleq
	\operatorname{Tr}
	\left[
		\mathbf D\mathbf G^{-1}(\mathbf F)
	\right]
	\label{prob:zf_pointing_obj}\\
	\mathrm{s.t.}\quad&
	\|\mathbf f_m\|_2=1,\qquad
	\mathbf f_0^T\mathbf f_m\geq\cos\delta,
	\quad \forall m\in\mathcal M .
	\label{prob:zf_pointing_cons}
	\end{align}
\end{subequations}

\begin{figure*}[!b]
\normalsize
\hrulefill
\begin{equation}
\nabla_{\mathbf f_m}J
=
-2p\Re
\left\{
\sum_{\substack{k\in\mathcal K:\\
\mathbf f_m^T\mathbf d_{k,m}>0}}
\left[
\mathbf H\mathbf G^{-1}
\mathbf D\mathbf G^{-1}
\right]_{m,k}^{*}
c_{k,m}
\left(
\mathbf f_m^T\mathbf d_{k,m}
\right)^{p-1}
\mathbf d_{k,m}
\right\}.
\label{eq:analytic_gradient}
\end{equation}
\end{figure*}

Let $J_{\rm ZF}^{\star}(\delta)$ denote the optimal objective value of
problem~\eqref{prob:zf_pointing} for a given $\delta$.
Then,
$g_{\rm ZF}(\delta)
=
P_{\max}/J_{\rm ZF}^{\star}(\delta)$.
When all users have identical noise variances, i.e.,
$\sigma_k^2=\sigma^2,\ \forall k\in\mathcal K$,
we have
$J(\mathbf F)
=
\sigma^2
\sum_{k=1}^{K}
\lambda_k^{-1}(\mathbf G)$,
where $\lambda_k(\mathbf G)$ denotes the $k$th eigenvalue of $\mathbf G$.
Thus, small eigenvalues of $\mathbf G$ incur a large ZF power penalty,
indicating that RA pointing should account for both the channel strength
and the spatial separability of the user channels.

We next derive the analytical gradient used to iteratively optimize problem~\eqref{prob:zf_pointing} for $p\geq1$.
From \eqref{eq:channel_km}, the channel coefficient between the $m$th RA
and the $k$th user can be written as
\begin{equation}
	h_{k,m}(\mathbf f_m)
	=
	c_{k,m}
	\left[
	\mathbf f_m^T\mathbf d_{k,m}
	\right]_+^p,
	\label{eq:channel_compact_mu}
\end{equation}
where
$c_{k,m}\triangleq
\sqrt{\beta_{k,m}G_0}
\exp(j2\pi r_{k,m}/\lambda)$
is independent of $\mathbf f_m$.

\begin{proposition}
\label{prop:analytic_gradient}
Suppose that $\mathbf H(\mathbf F)$ has full column rank.
For $p>1$, the Euclidean gradient of $J(\mathbf F)$ with respect to
the RA orientation vector $\mathbf f_m$ is given by
\eqref{eq:analytic_gradient}.
For $p=1$, the same expression is adopted under the generalized-gradient
convention specified in the proof.
\end{proposition}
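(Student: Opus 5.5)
We treat $J$ as a function of the unconstrained vector $\mathbf f_m\in\mathbb R^3$, ignoring the unit-norm and spherical-cap constraints in \eqref{prob:zf_pointing_cons}, since those are handled separately by the feasible updates. The plan is to compute the first-order differential $dJ$ in $d\mathbf f_m$ by the chain rule through $\mathbf G^{-1}$, $\mathbf G$, and $\mathbf H$, and then read off the gradient from $dJ=(\nabla_{\mathbf f_m}J)^T d\mathbf f_m$.

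\textbf{Step 1: differential of the trace.} Full column rank of $\mathbf H(\mathbf F)$ makes $\mathbf G$ Hermitian positive definite, hence invertible, and this persists in a neighbourhood of $\mathbf F$. Using $d\mathbf G^{-1}=-\mathbf G^{-1}(d\mathbf G)\mathbf G^{-1}$ and $d\mathbf G=d\mathbf H^H\mathbf H+\mathbf H^H d\mathbf H$, I obtain
\[
dJ=-\operatorname{Tr}\!\left[\mathbf G^{-1}\mathbf D\mathbf G^{-1}d\mathbf H^H\mathbf H\right]-\operatorname{Tr}\!\left[\mathbf G^{-1}\mathbf D\mathbf G^{-1}\mathbf H^H d\mathbf H\right].
\]
Set $\mathbf X\triangleq\mathbf H\mathbf G^{-1}\mathbf D\mathbf G^{-1}\in\mathbb C^{M\times K}$. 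Since $\mathbf G^{-1}\mathbf D\mathbf G^{-1}$ is Hermitian, the second trace equals $\operatorname{Tr}[\mathbf X^H d\mathbf H]=\sum_{m',k}[\mathbf X]_{m',k}^*\,d[\mathbf H]_{m',k}$. The first trace is the complex conjugate of the second. Therefore
\[
dJ=-2\Re\Bigl\{\sum_{m',k}[\mathbf X]^*_{m',k}\,d[\mathbf H]_{m',k}\Bigr\}.
\]

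\textbf{Step 2: entrywise derivative.} Because $\mathbf H=[\mathbf h_1,\ldots,\mathbf h_K]$, we have $[\mathbf H]_{m',k}=h_{k,m'}$, and by \eqref{eq:channel_compact_mu} only row $m$ depends on $\mathbf f_m$. For $p>1$, the map $x\mapsto[x]_+^p$ is $C^1$ with derivative $p[x]_+^{p-1}$, which vanishes for $x\le 0$. Hence
\[
dh_{k,m}=p\,c_{k,m}\,(\mathbf f_m^T\mathbf d_{k,m})^{p-1}\,\mathbf d_{k,m}^T d\mathbf f_m
\]
when $\mathbf f_m^T\mathbf d_{k,m}>0$, and $dh_{k,m}=0$ otherwise. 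Substituting into Step 1, pulling the real vector $d\mathbf f_m$ outside $\Re\{\cdot\}$, and restricting the sum to users with $\mathbf f_m^T\mathbf d_{k,m}>0$ yields exactly \eqref{eq:analytic_gradient}. Since $J$ is a composition of $C^1$ maps, this differential is a genuine Fréchet derivative, so \eqref{eq:analytic_gradient} is the Euclidean gradient.

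\textbf{Step 3: the case $p=1$.} Here $[x]_+$ has a kink at $x=0$, so $J$ may fail to be differentiable when $\mathbf f_m^T\mathbf d_{k,m}=0$ for some $k$. I will state the convention explicitly. Away from this set the formula holds as in Step 2, now with $(\cdot)^{p-1}=1$. On the set itself, we select from the Clarke subdifferential $[0,1]$ of $[x]_+$ the element $0$, which amounts to excluding such users from the sum. This agrees with the one-sided limit from the inactive side and reproduces \eqref{eq:analytic_gradient}.

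The main obstacle I expect is the complex-conjugation bookkeeping in Step 1. The points needing care are the correct index ordering $[\mathbf X]_{m,k}$ versus $h_{k,m}$ and showing the two trace terms combine into $2\Re\{\cdot\}$. I will settle this by using the Hermitian symmetry of $\mathbf G^{-1}\mathbf D\mathbf G^{-1}$ together with $\operatorname{Tr}(\mathbf A^H)=\operatorname{Tr}(\mathbf A)^*$. A secondary point is justifying the boundary behaviour in Step 3, which is handled by the stated convention.
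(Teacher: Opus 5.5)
Your proposal is correct and follows the same route as the paper's proof. Both arguments use $\mathrm dJ=-\operatorname{Tr}(\mathbf G^{-1}\mathbf D\mathbf G^{-1}\mathrm d\mathbf G)$ with $\mathrm d\mathbf G=(\mathrm d\mathbf H)^H\mathbf H+\mathbf H^H\mathrm d\mathbf H$, combine it with the entrywise differential of $c_{k,m}[\mathbf f_m^T\mathbf d_{k,m}]_+^p$, and take the zero element of the Clarke generalized gradient at the kink when $p=1$. Your use of the Hermitian symmetry of $\mathbf G^{-1}\mathbf D\mathbf G^{-1}$ to merge the two trace terms into $-2\Re\{\operatorname{Tr}(\mathbf X^H\mathrm d\mathbf H)\}$ spells out the ``collecting the corresponding terms'' step that the paper leaves implicit.
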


\begin{IEEEproof}
For $p>1$, $[x]_+^p$ is continuously differentiable.
For $p=1$, \eqref{eq:analytic_gradient} gives the ordinary gradient
whenever $\mathbf f_m^T\mathbf d_{k,m}\neq0$.
At $\mathbf f_m^T\mathbf d_{k,m}=0$, we select the zero element of the
Clarke generalized gradient of $[x]_+$ at $x=0$.

From
$J=\operatorname{Tr}(\mathbf D\mathbf G^{-1})$ and
$\mathrm d\mathbf G^{-1}
=-\mathbf G^{-1}(\mathrm d\mathbf G)\mathbf G^{-1}$,
we obtain
\begin{equation}
\mathrm dJ
=
-\operatorname{Tr}
\!\left(
\mathbf G^{-1}\mathbf D\mathbf G^{-1}
\mathrm d\mathbf G
\right),
\end{equation}
where
$\mathrm d\mathbf G
=(\mathrm d\mathbf H)^H\mathbf H
+\mathbf H^H\mathrm d\mathbf H$.
Combining this result with the differential of
\eqref{eq:channel_compact_mu} with respect to $\mathbf f_m$ and
collecting the corresponding terms yields
\eqref{eq:analytic_gradient}.
\end{IEEEproof}

Since $\mathbf f_m$ lies on the unit sphere, we first project
\eqref{eq:analytic_gradient} onto the tangent space at $\mathbf f_m$
and generate a candidate point along the negative gradient direction:
\begin{equation}
	\mathbf g_m
	=
	(\mathbf I_3-\mathbf f_m\mathbf f_m^T)
	\nabla_{\mathbf f_m}J,
	\qquad
	\widehat{\mathbf f}_m
	=
	\frac{\mathbf f_m-\mu\mathbf g_m}
	{\|\mathbf f_m-\mu\mathbf g_m\|_2},
	\label{eq:tangent_update}
\end{equation}
where $\mu>0$ denotes the step size.

If
$\mathbf f_0^T\widehat{\mathbf f}_m\geq\cos\delta$,
then $\widehat{\mathbf f}_m$ already satisfies the spherical-cap constraint.
Otherwise, it is projected onto the boundary of the spherical cap.
The resulting update is
\begin{equation}
	\mathbf f_m^{+}
	=
	\begin{cases}
		\widehat{\mathbf f}_m,
		&
		\mathbf f_0^T\widehat{\mathbf f}_m
		\geq\cos\delta,
		\\[1mm]
		\displaystyle
		\cos\delta\,\mathbf f_0
		+
		\sin\delta\,
		\frac{
		(\mathbf I_3-\mathbf f_0\mathbf f_0^T)
		\widehat{\mathbf f}_m
		}{
		\|
		(\mathbf I_3-\mathbf f_0\mathbf f_0^T)
		\widehat{\mathbf f}_m
		\|_2
		},
		&
		\text{otherwise}.
	\end{cases}
	\label{eq:spherical_cap_update}
\end{equation}
The step size $\mu$ is selected by backtracking line search to ensure that
$J(\mathbf F)$ does not increase after each update.
The AG iterations terminate when the relative objective change is below $\epsilon_{\rm AG}$ or the maximum number of iterations $I_{\rm AG}$ is reached. 

\subsection{Proposed AG-ZF Algorithm}
\label{subsec:agzf}
Based on the analytical gradient and feasible update derived above, we propose the AG-ZF algorithm. 
For each candidate rotation budget $\delta$, the fixed-budget RA pointing problem is solved using analytical-gradient updates with backtracking line search.
Since the optimized orientations may not fully use the prescribed rotation budget, the effective transmission rate is evaluated using the actual maximum rotation angle
$a(\mathbf F)=\max_{m\in\mathcal M}\arccos(\mathbf f_0^T\mathbf f_m)$.
The outer search starts from $Q_0$ rotation-budget samples over the feasible interval. At each refinement stage, new candidate budgets are generated around the actual rotation angle of the current best solution.
After $S$ refinement stages, the solution with the highest effective transmission rate is selected.
The overall procedure is summarized in Algorithm~\ref{alg:agzf}.

\begin{algorithm}[!t]
    \caption{Proposed AG-ZF Algorithm}
    \label{alg:agzf}
    \begin{algorithmic}[1]
        \REQUIRE
        $Q_0$, $S$, $I_{\rm AG}$, $\epsilon_{\rm AG}$.
        \ENSURE
        $\widehat{\delta}$, $\widehat{\mathbf F}$,
        $\widehat{\mathbf W}$.

        \STATE Set
        $\mathcal D_\delta^{(0)}
        =
        \left\{
        q\delta_{\max}/(Q_0-1)
        \right\}_{q=0}^{Q_0-1}$.

        \STATE Set $R_{\rm best}=-\infty$.

        \FOR{$s=0,\ldots,S$}
            \FOR{$\delta\in\mathcal D_\delta^{(s)}$}

                \STATE Initialize a feasible
                $\mathbf F^{(0)}$ and set $i=0$.

                \REPEAT
                    \STATE Compute
                    $\nabla_{\mathbf f_m}J$ from
                    \eqref{eq:analytic_gradient}.

                    \STATE Select $\mu$ by backtracking
                    and update $\mathbf f_m$ using
                    \eqref{eq:tangent_update} and
                    \eqref{eq:spherical_cap_update}.

                    \STATE $i\leftarrow i+1$.

                \UNTIL{
                    $\frac{|J^{(i)}-J^{(i-1)}|}
                    {J^{(i-1)}}
                    \leq\epsilon_{\rm AG}$
                    or $i=I_{\rm AG}$}

                \STATE Set
                $\widehat{\mathbf F}(\delta)
                =\mathbf F^{(i)}$.

                \STATE Set
                $\delta_a(\delta)
                =a(\widehat{\mathbf F}(\delta))$
                and
                $R(\delta)
                =A(\delta_a)B
                \log_2\!\left(
                1+\eta_{\rm ZF}
                (\widehat{\mathbf F}(\delta))
                \right)$.

                \IF{$R(\delta)>R_{\rm best}$}
                    \STATE Set
                    $R_{\rm best}\leftarrow R(\delta)$
                    and
                    $\mathbf F_{\rm best}
                    \leftarrow\widehat{\mathbf F}(\delta)$.
                \ENDIF

            \ENDFOR

            \STATE Select the solution with
            the largest $R(\delta)$ in the current stage,
            denoted by $\widehat{\mathbf F}_s$.

            \IF{$s<S$}
                \STATE Let $\delta_{\rm L}$
                and $\delta_{\rm R}$ be the
                adjacent evaluated points to
                $a(\widehat{\mathbf F}_s)$.

                \STATE Set
                $\mathcal D_\delta^{(s+1)}
                =
                \left\{
                \frac{
                \delta_{\rm L}
                +a(\widehat{\mathbf F}_s)
                }{2},
                \frac{
                a(\widehat{\mathbf F}_s)
                +\delta_{\rm R}
                }{2}
                \right\}$.
            \ENDIF

        \ENDFOR

        \STATE Set
        $\widehat{\mathbf F}
        =\mathbf F_{\rm best}$
        and
        $\widehat{\delta}
        =a(\widehat{\mathbf F})$.

        \STATE Set
        $\widetilde{\mathbf W}
        =\mathbf H(\widehat{\mathbf F})
        \mathbf G^{-1}(\widehat{\mathbf F})$
        and
        $\widehat{\mathbf w}_k
        =\sqrt{\eta_{\rm ZF}(\widehat{\mathbf F})\sigma_k^2}\,
        \widetilde{\mathbf w}_k$.

        \RETURN
        $\widehat{\delta},
        \widehat{\mathbf F},
        \widehat{\mathbf W}$.

    \end{algorithmic}
\end{algorithm}

For any fixed $\delta$, the backtracking line search ensures that $J(\mathbf F)$ does not increase after each accepted update.
Since $J(\mathbf F)>0$, the resulting objective-value sequence is monotonically nonincreasing and bounded below, and therefore converges to a finite value.

For complexity comparison, we consider the delay-aware alternating optimization (DA-AO) scheme described in Section~V.
Let $Q_{\rm tot}$ denote the total number of fixed-budget problems evaluated during the outer rotation search.
For each rotation budget, DA-AO alternately solves SOCP-based approximations of the beamforming and RA pointing subproblems, whose numbers of real-valued optimization variables are $2MK+K+1$ and $2M+K+1$, respectively.
Using a generic interior-point method, its overall computational complexity can be conservatively estimated as
$\widetilde{\mathcal O}\!\Big(
Q_{\rm tot}I_{\rm AO}\big[
N_{\eta}I_{\rm IP,W}(2MK+K+1)^3
+I_{\rm IP,F}(2M+K+1)^3
+MK^2
\big]
\Big),$
where $I_{\rm AO}$ denotes the number of alternating-optimization iterations,
$I_{\rm IP,W}$ and $I_{\rm IP,F}$ denote the numbers of interior-point iterations required by the beamforming and RA-pointing subproblems, respectively, and $N_\eta$ denotes the average number of iterations used for the max-min SINR search in the beamforming subproblem.
The delay-unaware alternating optimization (DU-AO) scheme has the same per-iteration complexity, but does not require the outer rotation-budget search.

In contrast, AG-ZF avoids solving generic convex optimization subproblems.
Its main computational cost arises from the matrix operations required for evaluating the objective and analytical gradient, including the construction of the Gram matrix and the solution of $K\times K$ linear
systems.
The dominant complexity of one channel evaluation and update is $\mathcal O(MK^2+K^3)$.
Let $N_{\rm bt}$ denote the maximum number of backtracking trials per
AG iteration. For each rotation budget, AG-ZF performs at most
$I_{\rm AG}$ AG iterations. Accordingly, its overall complexity is
$\mathcal O\!\left(
Q_{\rm tot}
I_{\rm AG}N_{\rm bt}
(MK^2+K^3)
\right).$
Therefore, AG-ZF replaces the repeated SOCP solves required by DA-AO with analytical-gradient updates and structured matrix operations, thereby reducing the computational complexity.
\section{Simulation Results}
\label{SimulationResults}
In this section, numerical simulations are conducted to evaluate the proposed rotation-delay-aware RA-enabled UAV downlink communication scheme and the corresponding optimization algorithms.
Unless otherwise specified, the simulation parameters are summarized in Table~\ref{tab:simulation_parameters}.
The RAs rotate independently and in parallel at a common angular speed of \(\omega=2\pi\) rad/s, which is comparable to the maximum yaw tracking speed reported for the Trace-G1 antenna-tracking gimbal~\cite{ViewproTraceG1}. Since a higher rotation speed reduces the rotation delay, this setting provides a relatively conservative evaluation of the impact of rotation delay. Unless otherwise specified, the reported average results are obtained over $100$ independent user-location realizations.

\begin{table}[!t]
	\centering
		\caption{Simulation Parameters}
		\label{tab:simulation_parameters}
        \setlength{\tabcolsep}{2.5pt}
		\begin{tabular}{|c|c|c|}
			\hline
			\textbf{Parameter} & \textbf{Description} & \textbf{Value} \\
			\hline
			$f_c$ 
			& Carrier frequency 
			& $5.8$ GHz \\
			\hline
			
			$B$ 
			& System bandwidth 
			& $1$ MHz \\
			\hline
			
			$P_{\max}$ 
			& Maximum transmit power 
			& $10$ mW \\
			\hline
			
			$N_0$ 
			& Noise power spectral density 
			& $-174$ dBm/Hz \\
			\hline
			
			-- 
			& Receiver noise figure 
			& $7$ dB \\
			\hline
			
			$d$ 
			& Inter-element spacing
			& $\lambda/2$ \\
			\hline
			
			$M$ 
			& Number of transmit RAs 
			& $8$ ($2\times4$ UPA) \\
			\hline
			
			$K$ 
			& Number of users 
			& $4$ \\
			\hline
			
            $\mathbf p$
            & Transmitting UAV position
            & $(0,0,100)$ m \\
            \hline
            
            $u_{k,x},\,u_{k,y}$
            & Horizontal coordinates of users
            & $\mathcal U[-100,100]$ m \\
            \hline
            
            $u_{k,z}$
            & Vertical coordinates of users
            & $\mathcal U[0,100)$ m \\
            \hline
			
			$T_s$ 
			& Communication slot duration 
			& $0.5$ s \\
			\hline
			
			$\theta_{\max}$ 
			& Maximum mechanical deflection angle 
			& $\pi/2$ \\
			\hline
			
			$p$ 
			& Antenna directionality parameter 
			& $2$ \\
			\hline
			
			$\omega$ 
			& RA angular speed 
			& $2\pi$ rad/s \\
			\hline

   			$I_{\rm AG}$
			& Maximum number of AG iterations
			& $60$ \\
			\hline
			
			$\epsilon_{\rm AG}$
			& AG-ZF convergence tolerance
			& $10^{-5}$ \\
			\hline
   
   			$Q_0$
			& Number of initial rotation-budget samples
			& $5$ \\
			\hline
            $S$
            & Number of refinement stages
            & $2$ \\
            \hline
		\end{tabular}
\end{table}

\subsection{Single-User Performance}

\begin{figure}[!t]
	\centering
	\includegraphics[
	width=\columnwidth,
	trim=10 20 0 0,
	]{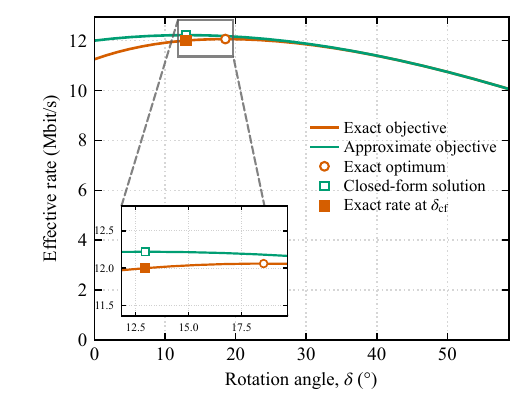}
	\caption{Effective transmission rate versus rotation budget for a representative single-user realization.}
	\label{fig:single_case}
\end{figure}

For the single-user system, we consider the following four schemes for comparison.
\textbf{Exact Solution}: the globally optimal rotation budget is obtained by
solving problem~\eqref{prob:far_field}.
When the sufficient condition in Theorem~\ref{thm:single_optimal_rotation}
is satisfied, the boundary cases are first identified using
\eqref{eq:optimal_delta_exact}, and bisection is applied when the optimum
lies in the interior.
Otherwise, a one-dimensional search is performed.
\textbf{Closed-form}: the rotation budget is determined by the proposed
second-order closed-form solution in \eqref{eq:delta_cf}.
\textbf{Delay-unaware}: the RA orientations are optimized without
accounting for the mechanical rotation delay, while the actual rotation
time is included when evaluating the effective transmission rate.
\textbf{Fixed Antenna}: all RAs remain at the common reference orientation $\mathbf f_0$.

We first evaluate the closed-form solution derived from the second-order
expansion.
Fig.~\ref{fig:single_case} compares the exact effective transmission rate and its second-order approximation as functions of the rotation budget $\delta$ for a representative single-user realization.
Here, $\delta^\star$ denotes the exact optimal rotation budget, whereas $\delta_{\rm cf}$ denotes the closed-form solution in \eqref{eq:delta_cf}.
Although $\delta_{\rm cf}$ does not exactly coincide with $\delta^\star$, the exact effective transmission rate achieved at $\delta_{\rm cf}$ is very close to the optimum.
This result demonstrates that the proposed closed-form solution can achieve a near-optimal effective transmission rate despite the difference in the selected rotation budget.

\begin{figure}[!t]
    \centering

    \subfloat[Varying $M$]{%
        \includegraphics[width=\columnwidth,trim=0 5 0 0,clip,]
        {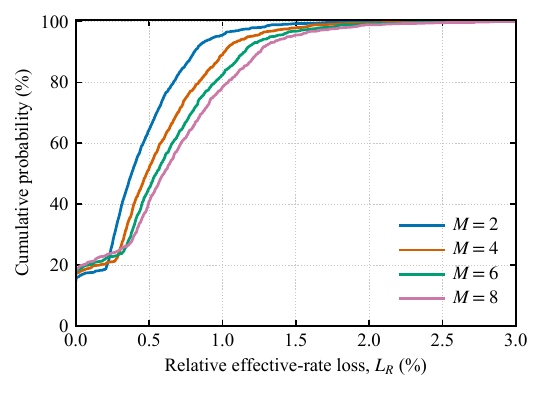}%
        \label{fig:single_loss_cdf_a}
    }

    \subfloat[Varying SNR]{%
        \includegraphics[width=\columnwidth,trim=0 5 0 0,clip,]
        {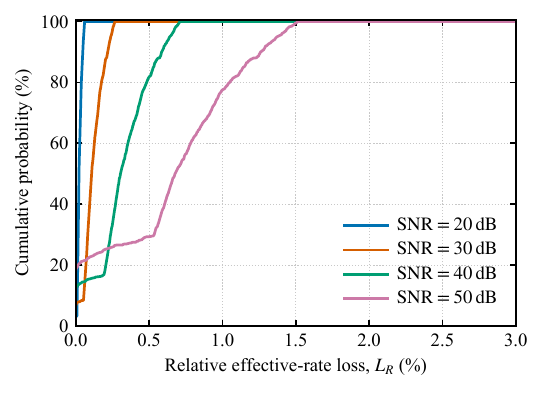}%
        \label{fig:single_loss_cdf_b}
    }

    \caption{CDFs of the relative effective-rate loss
    of the closed-form rotation design under different
    system configurations:
    (a) varying the number of RAs $M$;
    (b) varying the SNR with $M=4$.}

    \label{fig:single_loss_cdf}
\end{figure}

To quantify the performance gap between a test scheme and a reference scheme, we define the relative effective-rate loss as
\begin{equation}
    L_R\triangleq
    \frac{R_{\mathrm{ref}}-R_{\mathrm{test}}}{R_{\mathrm{ref}}}
    \times100\%,
    \label{eq:relative_rate_loss}
\end{equation}
where $R_{\mathrm{ref}}$ and $R_{\mathrm{test}}$ denote the effective rates of the reference and test schemes, respectively.
For the closed-form rotation design, $R_{\mathrm{ref}} =\bar R_{\mathrm{FF}}(\delta^\star)$ and
$R_{\mathrm{test}}=\bar R_{\mathrm{FF}}(\delta_{\mathrm{cf}})$.
Fig.~\ref{fig:single_loss_cdf} plots the cumulative distribution functions (CDFs) of $L_R$ for different array sizes and peak-aligned SNR values. 
As shown in Fig.~\ref{fig:single_loss_cdf}\subref{fig:single_loss_cdf_a}, the distribution of $L_R$ shifts toward larger values as $M$ increases. Similarly, Fig.~\ref{fig:single_loss_cdf}\subref{fig:single_loss_cdf_b} shows that $L_R$ increases with SNR. Overall, although the closed-form approximation does not necessarily yield the exact optimal rotation budget, it consistently achieves near-optimal effective-rate performance across the considered system configurations.

\begin{figure}[t]
    \centering
    \includegraphics[
    width=\columnwidth,
    trim=10 20 0 0,
    ]{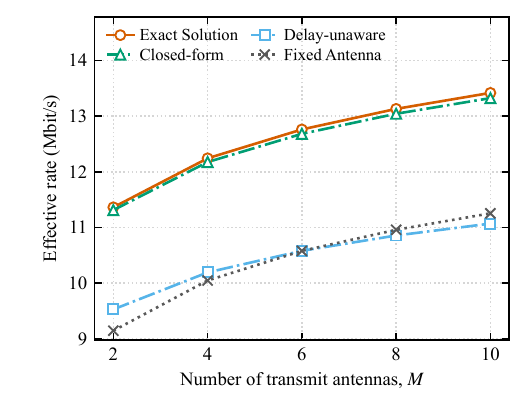}
    \caption{Average effective transmission rate versus the number of transmit RAs $M$ in the single-user system.}
    \label{fig:single_user_M}
\end{figure}

\begin{figure}[t]
    \centering
    \includegraphics[
    width=\columnwidth,
    trim=10 20 0 0,
    ]{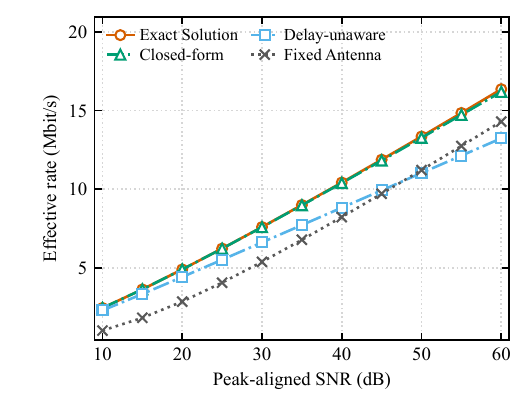}
    \caption{Average effective transmission rate versus the peak-aligned SNR in the single-user system.}
    \label{fig:single_user_snr}
\end{figure}

Figs.~\ref{fig:single_user_M} and~\ref{fig:single_user_snr} compare the average effective transmission rates versus the number of RAs $M$ and the peak-aligned SNR, respectively.
For both parameters, the exact and closed-form designs achieve the highest rates with a consistently small performance gap.
The delay-unaware design incurs a rate penalty because it selects its rotation angles without considering the resulting loss of transmission time.
Notably, Fixed Antenna overtakes the delay-unaware design at larger $M$ in Fig.~\ref{fig:single_user_M} and at sufficiently high SNR in Fig.~\ref{fig:single_user_snr}.
This observation shows that optimizing the post-rotation channel alone can be less effective than avoiding mechanical rotation altogether when its time cost is ignored.

Finally, Fig.~\ref{fig:single_regions} illustrates the impact of the mechanical rotation speed on the optimal rotation decision.
The mechanical limit is set to $\theta_{\max}=50^\circ$, and the angular speed $\omega$ is varied to change the normalized full-range rotation-time ratio $\rho=\theta_{\max}/(\omega T_s)$.
As $\rho$ increases, i.e., as the RAs rotate more slowly, the optimal solution successively transitions through three regimes.
Specifically, full rotation is optimal for $\rho\leq\rho_1$, where $\delta^\star=\bar{\delta}$; partial rotation is optimal for $\rho_1<\rho<\rho_2$, where $0<\delta^\star<\bar{\delta}$; and no
rotation is optimal for $\rho\geq\rho_2$, where $\delta^\star=0$.
When rotation is no longer beneficial, the effective rate levels off because the optimal design incurs no mechanical rotation delay.

\begin{figure}[t]
    \centering
    \includegraphics[
        width=\columnwidth,
        trim=10 10 10 0,
    ]{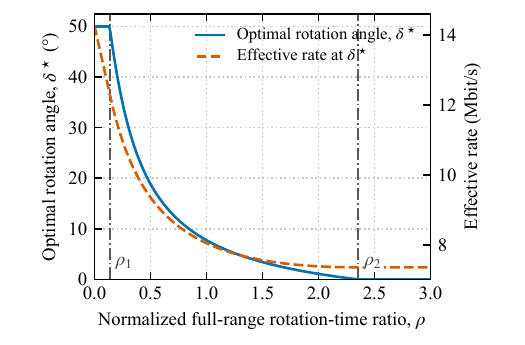}
    \caption{Optimal rotation budget and effective transmission rate versus the normalized full-range rotation time $\rho$ in the single-user system.}
    \label{fig:single_regions}
\end{figure}

\subsection{Multi-user Performance}

We next evaluate the performance of the proposed AG-ZF algorithm in multi-user scenarios. 
For comparison, the following three benchmark schemes are considered.
\textbf{Delay-Aware Alternating Optimization}:
for each rotation budget, the RA orientations and transmit beamforming are alternately optimized, and the effective rate accounting for rotation delay is used to select the budget.
\textbf{Delay-Unaware Alternating Optimization}:
the RA orientations and transmit beamforming are alternately optimized without considering the mechanical rotation delay. The actual rotation delay is included only when evaluating the effective transmission rate.
\textbf{Fixed Antenna}:
all directional antennas remain at the reference orientation $\mathbf f_0$,
and only the transmit beamforming is optimized.

\begin{figure}[t]
	\centering
	\includegraphics[
    width=\columnwidth,
    trim=10 25 0 12,
    ]{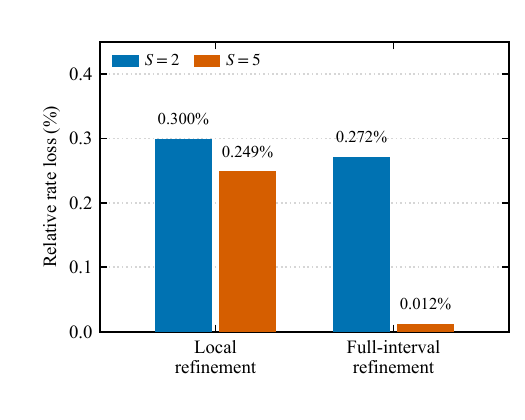}
	\caption{
		Average relative effective-rate loss under different outer-search
		refinement strategies and numbers of refinement stages $S$.
	}
	\label{fig:refinement}
\end{figure}

Fig.~\ref{fig:refinement} compares the relative effective-rate loss of AG-ZF under different outer-search refinement strategies and numbers of refinement stages $S$.
The loss is evaluated according to \eqref{eq:relative_rate_loss}, where $R_{\mathrm{ref}}$ is  obtained through a dense search over the entire feasible rotation-budget interval with a spacing of $0.25^\circ$, and $R_{\mathrm{test}}$ is the effective transmission rate achieved by the
corresponding refinement strategy.
With two refinement stages, local refinement incurs an average loss of $0.300\%$, compared with $0.272\%$ for full-interval refinement.
Increasing the number of stages to $S=5$ reduces the respective losses to $0.249\%$ and $0.012\%$.
Although deeper full-interval refinement yields the smallest loss, it requires additional budget evaluations. Since local refinement with $S=2$ achieves an average rate loss of only $0.3\%$, this configuration is adopted for the remaining simulations.

\begin{figure}[t]
	\centering
	\includegraphics[
	width=\columnwidth,
	trim=10 20 0 0,
	]{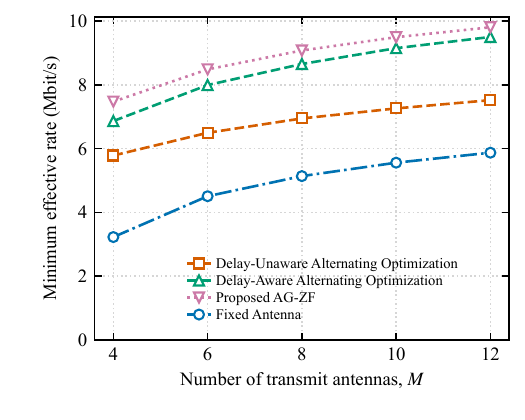}
	\caption{Average minimum effective transmission rate of different
		schemes versus the number of transmit RAs $M$.}
	\label{fig:multiuser_M}
\end{figure}

Fig.~\ref{fig:multiuser_M} compares the average minimum effective transmission rates of different schemes versus the number of RAs $M$.
The performance of all schemes improves as $M$ increases, owing to the higher array gain and enhanced spatial separation capability of the larger transmit array.
Under the same simulation settings, the proposed AG-ZF achieves a higher effective transmission rate than Delay-Aware Alternating Optimization over the considered range of $M$.
This result indicates that the proposed Gram-matrix-based optimization can achieve better-performing feasible solutions under the considered settings.
Delay-Unaware Alternating Optimization achieves lower rates because it ignores rotation delay during optimization, while Fixed Antenna achieves the lowest effective rate.

Fig.~\ref{fig:multiuser_K} further examines the impact of the number of users $K$ with the number of transmit RAs fixed.
As $K$ increases, separating the user channels becomes more challenging, leading to a decrease in the average minimum effective transmission rate for all three schemes. AG-ZF consistently achieves the highest effective rate, demonstrating the benefit of RA orientation optimization with mechanical rotation delay explicitly considered.

\begin{figure}[t]
    \centering
    \includegraphics[
    width=\columnwidth,
    trim=10 20 0 0,
    ]{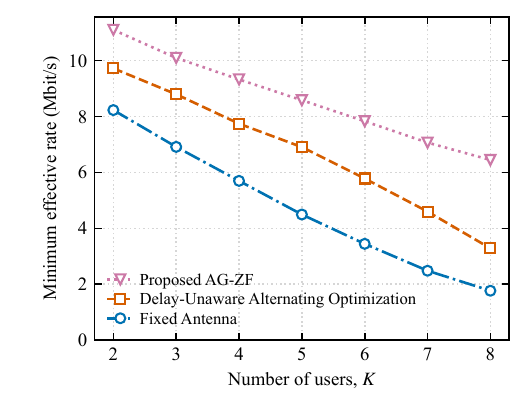}
    \caption{Average minimum effective transmission rate of different
    schemes versus the number of users $K$.}
    \label{fig:multiuser_K}
\end{figure}

Finally, Fig.~\ref{fig:multiuser_delay} examines the impact of the mechanical rotation speed on the average minimum effective transmission rate. With $\theta_{\max}$ fixed, the angular speed $\omega$ is varied to adjust $\rho$.
At small $\rho$, the proposed AG-ZF and Delay-Unaware Alternating Optimization achieve similar performance. As $\rho$ increases, their performance gap widens because the proposed algorithm limits the rotation extent to preserve transmission time, whereas the delay-unaware scheme continues to prioritize the post-rotation channel gain. At $\rho=0.8$, the proposed algorithm achieves approximately twice the effective rate of the delay-unaware scheme. Moreover, the latter eventually falls below Fixed Antenna, which incurs no rotation delay and thus maintains a constant effective rate. These results demonstrate the importance of explicitly accounting for mechanical rotation delay in RA orientation optimization.

\begin{figure}[t]
    \centering
    \includegraphics[
    width=\columnwidth,
    trim=10 20 0 0,
    ]{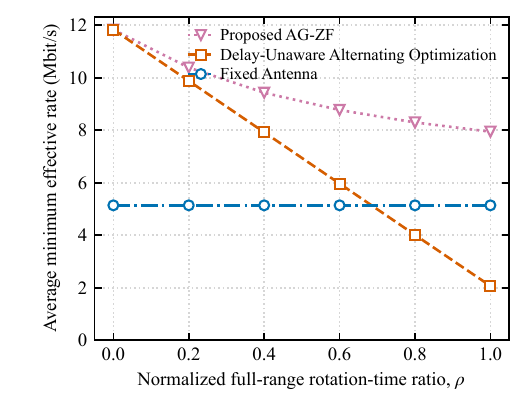}
    \caption{Average minimum effective transmission rate of different
    schemes versus the normalized full-range rotation time $\rho$.}
    \label{fig:multiuser_delay}
\end{figure}

\section{Conclusion}

This paper incorporated mechanical rotation delay into RA-enabled UAV downlink communications and jointly optimized RA orientations and transmit beamforming to maximize the minimum effective transmission rate under finite communication slots. 
The results show that accounting for rotation delay is necessary to maximize the effective transmission rate.
For the single-user system, we characterized three optimal rotation regimes, no rotation, partial rotation, and full rotation, under a sufficient condition for strict concavity. 
A closed-form approximation of the optimal rotation budget was further derived using a second-order expansion. 
For the multi-user system, a ZF-based formulation was developed to reduce the fixed-budget design to a Gram-matrix-based RA orientation problem with max-min fair power allocation.
Based on the resulting formulation, the AG-ZF algorithm was developed to efficiently solve the problem. 
Simulation results demonstrated the near-optimal effective-rate performance of the single-user closed-form approximation and the performance advantages of AG-ZF over the considered benchmark schemes. 

Future work may consider inter-slot RA reconfiguration under UAV mobility and incorporate practical factors such as rotation acceleration, settling time, mechanical energy consumption, and imperfect channel state information.
\appendices

\section{Proof of Theorem~\ref{thm:single_optimal_rotation}}
\label{app:proof_theorem1}
To prove the strict concavity of $\bar R_{\mathrm{FF}}(\delta)$, define
    $\ell(\delta)
    \triangleq
    \ln\!\left[
        1+C\cos^{2p}(\alpha-\delta)
    \right].$
Then,
    $\bar R_{\mathrm{FF}}(\delta)
    =
    \frac{B}{\ln 2}
    A(\delta)\ell(\delta).$

Over the feasible interval,
$0\leq\alpha-\delta<\pi/2$ since
$\bar{\delta}\leq\alpha<\pi/2$.
Differentiating $\ell(\delta)$ gives
\begin{equation}
	\ell'(\delta)
	=
	\frac{
		2pC\cos^{2p}(\alpha-\delta)
		\tan(\alpha-\delta)
	}{
		1+C\cos^{2p}(\alpha-\delta)
	}.
	\label{eq:appA_l_first}
\end{equation}
Thus, $\ell'(\delta)\geq0$, with
$\ell'(\delta)>0$ whenever $\delta<\alpha$.
The second derivative is
\begin{align}
	\ell''(\delta)
	={}&
	-
	\frac{
		2pC\cos^{2p}(\alpha-\delta)
	}{
		\left[
		1+C\cos^{2p}(\alpha-\delta)
		\right]^2
		\cos^2(\alpha-\delta)
	}
	\nonumber\\
	&\times
	\left[
		1
		+
		C\cos^{2p}(\alpha-\delta)
		-
		2p\sin^2(\alpha-\delta)
	\right].
	\label{eq:appA_l_second}
\end{align}
Under condition~\eqref{eq:concavity_condition},
it follows that $\ell''(\delta)\leq0$.
Moreover,
$A'(\delta)=-1/(\omega T_s)$ and $A''(\delta)=0$.
Therefore,
\begin{equation}
	\bar{R}_{\mathrm{FF}}''(\delta)
	=
	\frac{B}{\ln 2}
	\left[
	A(\delta)\ell''(\delta)
	-
	\frac{2}{\omega T_s}\ell'(\delta)
	\right].
	\label{eq:appA_rate_second}
\end{equation}

Since
$0\leq\delta\leq\bar{\delta}\leq\delta_{\max}\leq\omega T_s$,
we have $A(\delta)\geq0$.
Together with $\ell'(\delta)>0$ for $\delta<\alpha$ and
$\ell''(\delta)\leq0$, \eqref{eq:appA_rate_second} gives
$\bar{R}_{\mathrm{FF}}''(\delta)<0$ for $\delta<\alpha$.
Hence, if $\bar{\delta}<\alpha$,
$\bar{R}_{\mathrm{FF}}(\delta)$ is strictly concave over
$[0,\bar{\delta}]$.
If $\bar{\delta}=\alpha$, the same result holds on
$[0,\alpha)$, and continuity at the endpoint extends the
strict concavity to $[0,\alpha]$.
Thus, $\bar{R}_{\mathrm{FF}}(\delta)$ is strictly concave over
$[0,\bar{\delta}]$ in either case, and problem~\eqref{prob:far_field}
has a unique global optimum.

The location of the optimum follows directly from the strict monotonicity of
$\bar{R}_{\mathrm{FF}}'(\delta)$.
If $\bar{R}_{\mathrm{FF}}'(0)\leq0$, then
$\delta^\star=0$.
If $\bar{R}_{\mathrm{FF}}'(\bar{\delta})\geq0$, then
$\delta^\star=\bar{\delta}$.
Otherwise,
$\bar{R}_{\mathrm{FF}}'(0)>0$ and
$\bar{R}_{\mathrm{FF}}'(\bar{\delta})<0$,
so there exists a unique
$\delta_0\in(0,\bar{\delta})$
such that
$\bar{R}_{\mathrm{FF}}'(\delta_0)=0$.
Hence, $\delta^\star=\delta_0$.
Combining the three cases gives
\eqref{eq:optimal_delta_exact}, which completes the proof.

\section{Proof of Theorem~\ref{thm:closed_form_rotation}}
\label{app:proof_theorem2}

Define
$\psi(x)\triangleq\ln\!\left(1+C\cos^{2p}x\right)$.
Since $\psi(x)$ is an even function,
$\psi(0)=\ln(1+C)$,
$\psi'(0)=0$, and
$\psi''(0)=-\frac{2Cp}{1+C}$.
Its second-order Taylor expansion about $x=0$ is therefore
\begin{equation}
	\psi(x)
	=
	\ln(1+C)
	-
	\frac{Cp}{1+C}x^2
	+
	\mathcal O(x^4).
	\label{eq:appB_taylor}
\end{equation}

Retaining terms up to second order and substituting
\eqref{eq:appB_taylor} into \eqref{eq:rate_x} gives
\begin{equation}
	\widetilde{\mathcal R}(x)
	=
	\frac{B}{\omega T_s\ln2}
	(a+x)
	\left[
	\ln(1+C)
	-
	\frac{Cp}{1+C}x^2
	\right].
	\label{eq:appB_approx_rate}
\end{equation}

For notational simplicity, define
$\kappa\triangleq\frac{Cp}{1+C}$ and
$L\triangleq\ln(1+C)$.
Since $C>0$ and $p>0$, both $\kappa$ and $L$ are positive.
Then,
\begin{equation}
	\widetilde{\mathcal R}(x)
	=
	\frac{B}{\omega T_s\ln2}
	(a+x)(L-\kappa x^2).
	\label{eq:appB_compact_rate}
\end{equation}
Its first and second derivatives are, respectively,
\begin{align}
	\widetilde{\mathcal R}'(x)
	&=
	\frac{B}{\omega T_s\ln2}
	\left(
	L-2a\kappa x-3\kappa x^2
	\right),
	\label{eq:appB_first_derivative}
	\\
	\widetilde{\mathcal R}''(x)
	&=
	-\frac{2B\kappa}{\omega T_s\ln2}
	(a+3x).
	\label{eq:appB_second_derivative}
\end{align}

Since $a>0$, $\kappa>0$, and $x\geq0$ over the feasible interval,
\eqref{eq:appB_second_derivative} gives
$\widetilde{\mathcal R}''(x)<0$.
Hence, $\widetilde{\mathcal R}(x)$ is strictly concave, and its unique stationary point is the unconstrained maximizer.

Setting $\widetilde{\mathcal R}'(x)=0$ gives
\begin{equation}
	3\kappa x^2
	+
	2a\kappa x
	-
	L
	=
	0.
	\label{eq:appB_stationary}
\end{equation}
The two roots are
\begin{equation}
	x_{\pm}
	=
	\frac{
		-a
		\pm
		\sqrt{
			a^2+\frac{3L}{\kappa}
		}
	}{3}.
	\label{eq:appB_roots}
\end{equation}
Since $L>0$ and $\kappa>0$,
$\sqrt{a^2+3L/\kappa}>a$.
Thus, $x_-<0$ and $x_+>0$, so the unique nonnegative stationary point is
\begin{equation}
	x_{\mathrm{cf}}
	=
	\frac{
		\sqrt{
			a^2+\frac{3L}{\kappa}
		}
		-a
	}{3}.
	\label{eq:appB_xcf}
\end{equation}
Substituting $L=\ln(1+C)$ and
$\kappa=Cp/(1+C)$ gives \eqref{eq:x_cf}.

Since $\widetilde{\mathcal R}(x)$ is strictly concave, $x_{\mathrm{cf}}$ is optimal if it lies in the feasible interval
$[\alpha-\bar{\delta},\alpha]$; otherwise, the optimum is attained at the corresponding boundary point.
Using $\delta=\alpha-x$, the resulting rotation budget is given by
\eqref{eq:delta_cf}, which completes the proof.
\bibliographystyle{IEEEtran}
\bibliography{refs}

\end{document}